\newcounter{Lcount}
\newcommand{\numsquishlist}{
   \begin{list}{\arabic{Lcount}. }
    { \usecounter{Lcount}
 \setlength{\itemsep}{-.1ex}      \setlength{\parsep}{0ex}
      \setlength{\topsep}{0ex}       \setlength{\partopsep}{0ex}
      \setlength{\leftmargin}{1em} \setlength{\labelwidth}{1em}
      \setlength{\labelsep}{0.1em} } }
\newcommand{\numsquishend}{\end{list}}
\newcommand{\squishlist}{
   \begin{list}{$\bullet$}
    { \setlength{\itemsep}{-.1ex}      \setlength{\parsep}{0ex}
      \setlength{\topsep}{0ex}       \setlength{\partopsep}{0ex}
      \setlength{\leftmargin}{.8em} \setlength{\labelwidth}{1em}
      \setlength{\labelsep}{0.5em} } }
\newcommand{\squishend}{\end{list}}
\definecolor{HLcolor}{rgb}{0,0.0,0}
  \providecommand\BibTeX{{%
    \normalfont B\kern-0.5em{\scshape i\kern-0.25em b}\kern-0.8em\TeX}}}
\begin{document}

\title{LGTD: Local–Global Trend Decomposition for Season-Length–Free Time Series Analysis}


\author{Chotanansub Sophaken}
\orcid{0009-0004-5165-006X}
\affiliation{%
  \institution{King Mongkut’s University of Technology Thonburi}
  \country{Thailand}}
\email{chotanansub.s@gmail.com}

\author{Thanadej Rattanakornphan}
\affiliation{%
  \institution{Department of Computer Engineering, Kasetsart University}
  \streetaddress{}
  \city{Bangkok}
  \country{Thailand}
}
\email{ra.thanadej@gmail.com}

\author{Piyanon Charoenpoonpanich}
\affiliation{%
  \institution{Independence}
  \streetaddress{}
  \city{Bangkok}
  \country{Thailand}
}
\email{piyanon.charoenpoonpanich@gmail.com}

\author{Thanapol Phungtua-eng}
\affiliation{%
  \institution{Rajamangala University of Technology Tawan-ok
}
  \country{Thailand}}
\email{thanapol_ph@rmutto.ac.th}

\author{Chainarong Amornbunchornvej}
\orcid{0000-0003-3131-0370}
\affiliation{%
  \institution{National Electronics and Computer Technology Center}
  \streetaddress{112 Phahonyothin Road, Khlong Nueng, Khlong Luang District}
  \city{Pathumthani}
  \country{Thailand}
  \postcode{12120}}
\email{chainarong.amo@nectec.or.th}

\renewcommand{\shortauthors}{C. Sophaken, T Rattanakornphan, P. Charoenpoonpanich, T. Phungtua-eng, and C. Amornbunchornvej}

\begin{abstract}
Time series decomposition into trend, seasonal, and residual
components is a fundamental primitive in data management and
analytics pipelines, underpinning anomaly detection, change-point
analysis, and forecasting. Most existing methods require a
user-specified or estimated season length and implicitly assume
stable periodic structure. In large, heterogeneous collections---where
recurring patterns drift, appear intermittently, or operate at
multiple nonstationary scales---period selection becomes brittle and
per-series tuning does not scale.
We propose \textsc{LGTD} (Local--Global Trend Decomposition), a
season-length--free decomposition framework that requires no period
specification and operates with a single fixed default configuration
across datasets. \textsc{LGTD} represents a series as the sum of
(i)~a smooth global trend capturing long-term evolution, (ii)~adaptive
local trends inferred by an error-driven local linear segmentation
procedure, and (iii)~a residual component. Rather than modeling
seasonality through an explicit periodic basis, \textsc{LGTD} treats
it as an emergent property arising from the recurrence of local trend
regimes, decoupling decomposition quality from any predefined or
estimated season length.
We prove that the local trend inference procedure terminates in a
bounded number of refinement iterations and runs in linear time in the
series length, independent of any seasonal parameter, and confirm this
empirically: \textsc{LGTD} scales linearly in both runtime and memory
and is the fastest method across all tested lengths, while several
baselines degrade super-linearly. On synthetic benchmarks \textsc{LGTD}
achieves balanced decomposition accuracy across fixed, transitive, and
variable season-length regimes, particularly where period-based methods
degrade, and on real-world datasets it yields interpretable components
and low-structure residuals under irregular temporal dynamics---all
using a single default configuration. The complete source code and
datasets used in this work are publicly available at
\url{https://github.com/chotanansub/LGTD}, ensuring full
reproducibility of the reported results.
\end{abstract}

\begin{CCSXML}
<ccs2012>
<concept>
<concept_id>10002951.10003227.10003236</concept_id>
<concept_desc>Information systems~Time series analysis</concept_desc>
<concept_significance>500</concept_significance>
</concept>
<concept>
<concept_id>10002951.10003227.10003351</concept_id>
<concept_desc>Information systems~Data mining</concept_desc>
<concept_significance>500</concept_significance>
</concept>
</ccs2012>
\end{CCSXML}

\ccsdesc[500]{Information systems~Time series analysis}
\ccsdesc[500]{Information systems~Data mining}

\keywords{time series, time series decomposition, local trend, data science}


\received{20 December 2024}
\received[revised]{11 April 2025}
\received[accepted]{9 June 2025}

\maketitle

\section{Introduction}

Time-series decomposition into trend, seasonal structure, and residual components
is a foundational operation in knowledge discovery, supporting tasks such as
anomaly detection, change-point analysis, forecasting, and representation
learning. Classical methods such as STL~\cite{cleveland1990stl} and
regression-based variants~\cite{dokumentov2022str} remain widely used because they
produce interpretable components and integrate naturally into downstream
pipelines. However, these approaches rely on a strong structural assumption: the
existence of a user-specified or estimable \emph{season length}, with seasonality
modeled as a stable, approximately periodic signal. This assumption is frequently
violated in practice, particularly in heterogeneous collections where recurring
patterns drift, appear intermittently, or operate at multiple, nonstationary
time scales.

A closely related challenge concerns the representation of trend structure.
Real-world time series often exhibit regime changes, abrupt transitions, and
event-driven dynamics that are poorly captured by a single smooth trend.
Optimization-based estimators such as $\ell_1$ trend filtering~\cite{10.1137/070690274}
address this issue by modeling trends as piecewise-linear signals, interpreting
changes in slope as structural events. This perspective suggests that temporal
structure is often better understood as a sequence of local regimes rather than
as a globally smooth function, especially in operational and sensor-driven data.

Recent work has improved the robustness and scalability of seasonal--trend
decomposition, yet most methods remain fundamentally tied to explicit
season-length choices. MSTL extends STL-style decomposition to multiple seasonal
periods~\cite{bandara2021mstlseasonaltrenddecompositionalgorithm}. RobustSTL and
Fast RobustSTL improve robustness to anomalies and complex patterns
\cite{10.1609/aaai.v33i01.33015409,10.1145/3394486.3403271}. Online and
database-oriented variants further optimize scalability and streaming behavior
\cite{10.14778/3523210.3523219,10.14778/3583140.3583155,10.1145/3637528.3671510,11112870}.
Despite these advances, existing frameworks continue to define seasonality
through explicit periodic templates—either fixed or estimated—making
decomposition quality sensitive to period selection and limiting robustness
under irregular or drifting temporal structure.

\begin{figure*}[t]
    \centering
    \includegraphics[width=1\linewidth]{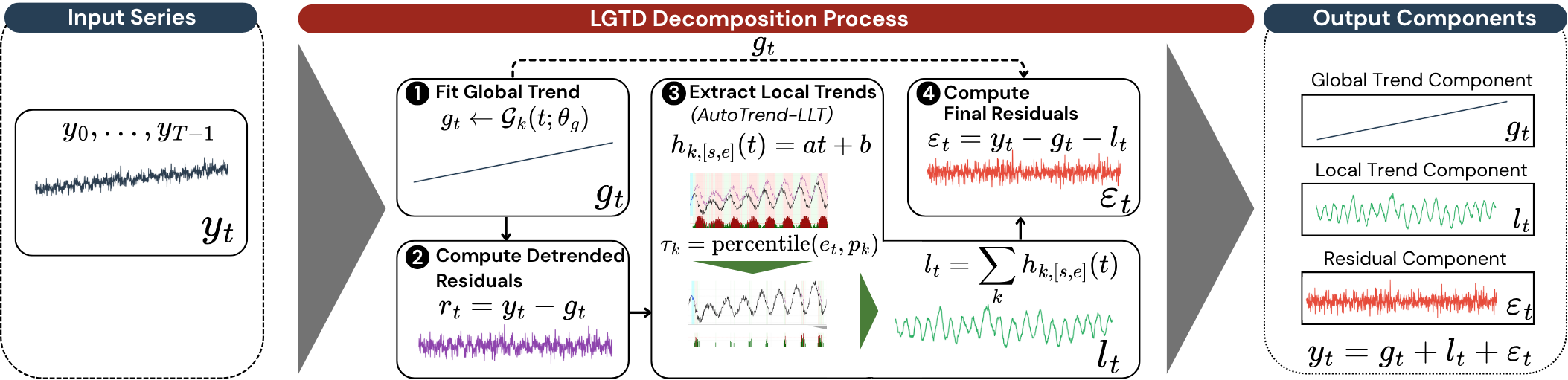}
    \caption{\emph{LGTD global--local decomposition process. After estimating a global trend $g_t$, the detrended series is segmented adaptively by AutoTrend-LLT to infer local trends $\ell_t$, whose recurrent structure gives rise to emergent seasonality without requiring a predefined season length. Removing both components yields residuals $\varepsilon_t$, giving $y_t = g_t + \ell_t + \varepsilon_t$.}}
    \label{fig:LGTD_pipeline}
\end{figure*}

This paper introduces \textsc{LGTD} (Local--Global Trend Decomposition), a
season-length--free decomposition framework that redefines seasonality as an
\emph{emergent structural property}. LGTD represents a time series as
\[
    y_t = g_t + \ell_t + \varepsilon_t,
\]
where $g_t$ captures long-term evolution via a smooth global trend, $\ell_t$
captures short-lived, piecewise-linear local regimes, and $\varepsilon_t$
contains residual variation. In LGTD, seasonality does not correspond to an explicit periodic basis or fixed-frequency signal; instead, it is defined operationally as the recurrence of similar local trend regimes after removal of the global trend. The seasonal component is thus the reconstructed local-trend signal whose repeated structural patterns give rise to implicit seasonality even when timing is irregular or drifting. While the distinction between trend and seasonality is not always uniquely identifiable in real-world time series, LGTD adopts a structural interpretation in which long-term evolution is assigned to the global trend and recurring local patterns are treated as manifestations of seasonality

Concretely, LGTD first estimates a global trend, then applies an adaptive,
error-driven local linear trend inference procedure (AutoTrend-LLT) to segment
the detrended signal into local regimes (Fig.~\ref{fig:LGTD_pipeline}). By
decoupling decomposition from any season-length specification, LGTD provides a
structural alternative to period-based methods for analyzing nonstationary time
series.

Our contributions are:
\begin{itemize}
    \item \textbf{Season-length--free decomposition.}
    We introduce a structural formulation in which seasonality emerges from the
    recurrence of data-driven local trend regimes rather than from explicit
    periodic modeling.
    \item \textbf{Adaptive local trend inference.}
    We propose AutoTrend-LLT, an error-driven segmentation procedure that
    identifies piecewise-linear local regimes while adapting to structural
    breaks and evolving dynamics.
    \item \textbf{Theoretical guarantees.}
    We establish a finite-termination bound for the local trend inference
    procedure and show that LGTD admits linear-time complexity in series length,
    independent of any seasonal parameter.
\end{itemize}

\section{Related Work}

\paragraph{Classical seasonal--trend decomposition.}
STL~\cite{cleveland1990stl} is a foundational seasonal--trend decomposition method based on repeated LOESS smoothing and requires a user-specified seasonal period, making it suitable primarily for single-seasonal time series. To address multiple seasonalities, MSTL extends STL by sequentially extracting several seasonal components~\cite{bandara2021mstlseasonaltrenddecompositionalgorithm}. While effective when relevant periods are known, multi-seasonal methods remain highly sensitive to season-length selection. In practice, heuristics such as using a large seasonal window are often employed, but they still assume prior knowledge of a period close to the ground truth, limiting automation~\cite{10.1609/aaai.v33i01.33015409,10.14778/3583140.3583155}. 

STR~\cite{dokumentov2022str} reformulates decomposition as a regression problem, preserving interpretability while still assuming explicit seasonal structure. Overall, classical methods perform well under stable and known periodicity, but their reliance on fixed seasonal lengths hinders deployment across heterogeneous collections.

\paragraph{Robust decomposition under anomalies and complex patterns.}
RobustSTL improves resilience to outliers, seasonality shifts, and abrupt changes through robust trend estimation~\cite{10.1609/aaai.v33i01.33015409}, while FastRobustSTL extends this framework to multiple seasonalities with improved efficiency~\cite{10.1145/3394486.3403271}. Both are grounded in $\ell_1$ trend filtering~\cite{10.1137/070690274}, which encourages piecewise-linear trends and yields interpretable structural breakpoints. This regime-based view of trend is well suited to event-driven time series. However, these methods require careful hyperparameter tuning and remain dependent on predefined seasonal periods, limiting their suitability for fully automated settings.

\paragraph{Online and systems-oriented decomposition.}
Several recent works focus on scalable or low-latency decomposition in streaming environments. OnlineSTL accelerates online decomposition for long seasonalities~\cite{10.14778/3523210.3523219}, while OneShotSTL and BacktrackSTL achieve $O(1)$ update time with different robustness and design trade-offs~\cite{10.14778/3583140.3583155,10.1145/3637528.3671510}. In database contexts, OneRoundSTL adapts decomposition to LSM-tree storage and missing data via page-level precomputation~\cite{11112870}. Although these systems emphasize scalability and deployment constraints, most still encode or assume an explicit seasonal period.

\paragraph{Season-length estimation under nonstationarity.}
When seasonal length is unknown, a common pipeline estimates dominant periods before applying decomposition. Representative methods include AutoPeriod~\cite{10.1137/1.9781611972757.40}, CFD-AutoPeriod~\cite{10.1007/978-3-030-39098-3_4}, and SAZED~\cite{Toller2019}. To handle evolving periodicity, adaptive approaches such as ASTD integrate online season-length estimation into the decomposition process~\cite{10.1007/978-3-031-70344-7_25}. These methods highlight a key challenge in practice: identifying an appropriate period can be as difficult as decomposition itself when cycles drift or appear intermittently. Despite adaptivity, they still rely on explicit season-length parameters.
\section{Methods}

This section describes the \textsc{LGTD} (Local--Global Trend Decomposition)
framework. Fig.~\ref{fig:LGTD_pipeline} summarizes the overall pipeline.
Given an input series $y_t$, \textsc{LGTD} first estimates a smooth global trend
$g_t$ to capture long-term structure. The detrended residuals are then processed
by the AutoTrend-LLT procedure to extract piecewise-linear local trends
$\ell_t$, whose recurring patterns act as emergent seasonality without
requiring a predefined season length. Subtracting both components yields the
final residuals $\varepsilon_t$, giving the decomposition
$y_t = g_t + \ell_t + \varepsilon_t$. The following subsections detail each
stage of the method.

\subsection{LGTD: Unified Global--Local Trend Decomposition}
\label{subsec:lgtd_full}

Traditional time--series decomposition methods require explicit specification
of a season length and assume that seasonal structure is stable and periodic.
These assumptions limit applicability in heterogeneous collections where cycles
may drift, appear intermittently, or occur at multiple scales. In contrast,
\textsc{LGTD} reframes seasonality as an \emph{emergent} phenomenon arising from
the repetition of data-driven local trend regimes. This eliminates the need to
specify a season length and allows the decomposition to adapt automatically to
irregular or evolving temporal patterns.

\paragraph{Model Structure.}
Given a univariate series $y_0,\dots,y_{T-1}$, \textsc{LGTD} decomposes it as

\begin{equation}
    y_t = g_t + \ell_t + \varepsilon_t,
\end{equation}

where $g_t$ is a smooth global trend capturing long-term evolution, $\ell_t$ is
a piecewise-linear local trend capturing short-term regimes and emergent
seasonality, and $\varepsilon_t$ contains residual fluctuations. The global
trend is obtained using any regression or smoothing model $\mathcal{G}$ (e.g.,
linear regression, spline smoothing, long-window local regression). Local
trends are inferred by applying AutoTrend-LLT
(Algorithm~\ref{algo:autotrend_llt}) to the global-trend residuals.

\paragraph{Global Trend Extraction.}
The series is first smoothed using $\mathcal{G}$ with hyperparameters
$\theta_g$, producing estimates $g_t = \mathcal{G}(t;\theta_g)$. Detrending
yields residuals $r_t = y_t - g_t$, which serve as input to AutoTrend-LLT.

\paragraph{Local Trend Extraction (Emergent Seasonality).}
Local trends are extracted from the detrended residuals using AutoTrend-LLT
(Section~\ref{subsec:local_trend}). In the LGTD decomposition, the local-trend
component is identified with the LLT reconstruction, i.e.,
$\ell_t := \hat r_t$, so that recurring local-trend regimes act as emergent
seasonality even when timing is irregular or drifting.

\paragraph{Final Residuals.}
Subtracting both global and local trends yields the final residual component
$\varepsilon_t = y_t - g_t - \ell_t$.

\paragraph{Algorithm.}
Algorithm~\ref{algo:LGTD} summarizes the complete \textsc{LGTD}
decomposition procedure.

\begin{algorithm}[t]
\caption{LGTD — Local–Global Trend Decomposition}
\label{algo:LGTD}
\DontPrintSemicolon
\small

\KwIn{Series $y_0,\dots,y_{T-1}$; global-trend model $\mathcal{G}$ with hyperparameters $\theta_g$;\\
      local-trend parameters $(w, K_{\max}, p_0, \Delta p,$ \texttt{update\_threshold}).}
\KwOut{Global trend $\mathbf{g}$; local trend $\boldsymbol{\ell}$; residuals $\boldsymbol{\varepsilon}$;\\
       local-trend labels $\mathbf{m}$; local models $\mathcal{M}$.}

\BlankLine
\textbf{Step 1: Fit global trend.}\;
Fit $\mathcal{G}$ on $\{(t,y_t)\}_{t=0}^{T-1}$ and compute
$g_t \leftarrow \mathcal{G}(t;\theta_g)$.\;
Set $\mathbf{g} \leftarrow (g_0,\dots,g_{T-1})$.\;

\BlankLine
\textbf{Step 2: Compute detrended residuals.}\;
$r_t \leftarrow y_t - g_t$ for all $t$; set $\mathbf{r} \leftarrow (r_0,\dots,r_{T-1})$.\;

\BlankLine
\textbf{Step 3: Extract local trends via LLT.}\;
$(\mathbf{m}, \hat{\mathbf{r}}, \mathcal{M}) \leftarrow
\textsc{LocalLinearTrend}(\mathbf{r}, w, K_{\max}, p_0, \Delta p, \texttt{update\_threshold})$;\;
set $\boldsymbol{\ell} \leftarrow \hat{\mathbf{r}}$, where $\ell_t := \hat r_t$.\;

\BlankLine
\textbf{Step 4: Compute final residuals.}\;
$\varepsilon_t \leftarrow y_t - g_t - \ell_t$ for all $t$; set
$\boldsymbol{\varepsilon} \leftarrow (\varepsilon_0,\dots,\varepsilon_{T-1})$.\;

\BlankLine
\Return $\mathbf{g}, \boldsymbol{\ell}, \boldsymbol{\varepsilon}, \mathbf{m}, \mathcal{M}$.\;

\end{algorithm}

\subsection{Adaptive Local Linear Trend Decomposition (AutoTrend-LLT)}

\label{subsec:local_trend}

A core component of \textsc{LGTD} is the identification of \emph{local trends}—
short-lived linear regimes whose repetition or evolution gives rise to emergent
seasonal behavior. Rather than assuming a fixed season length or a predefined
number of components, LGTD discovers local trends directly from data using an
iterative, error-driven segmentation procedure. This design enables LGTD to
capture irregular, drifting, or multi-scale patterns while remaining sensitive
to structural breaks.

\paragraph{Local Trend Modeling.}
Given a univariate series $y_0,\dots,y_{T-1}$, the method maintains a dynamic set
of \emph{focus targets}—indices whose local trend assignments remain uncertain.
At each iteration, these indices are grouped into disjoint contiguous
\emph{focus ranges} $\{[s_j, e_j]\}$, where each range denotes an inclusive
integer index interval. Let $w$ denote the local fitting window size. Each focus
range receives \emph{exactly one} local model in that iteration: an ordinary
least squares (OLS) linear regression fitted on the $w$ observations
immediately preceding the range. Specifically, for a focus range $[s,e]$, the
model is fitted on observations indexed by
$t_{\mathrm{start}}=\max(0,s-w)$ through
$t_{\mathrm{end}}=\min(e-1,t_{\mathrm{start}}+w-1)$.
This window provides the slope and intercept estimated by OLS, which are then
extrapolated forward across the entire range.

For a focus range $[s,e]$ at iteration $k$, the fitted local model takes the
form
\begin{equation}
    h_{k,[s,e]}(t) = a_{k,[s,e]} t + b_{k,[s,e]},
\end{equation}
yielding predictions
\begin{equation}
    \tilde{y}_t = h_{k,[s,e]}(t), \qquad t \in [s,e].
\end{equation}
For consistency with Algorithm~\ref{algo:autotrend_llt}, we denote the
per-index prediction array by $\tilde{\mathbf{y}}$ with
$\tilde{\mathbf{y}}[t] := \tilde{y}_t$. Absolute prediction errors are then
computed as
\begin{equation}
    e_t = |y_t - \tilde{y}_t| = |y_t - \tilde{\mathbf{y}}[t]|.
\end{equation}

\paragraph{Error-Driven Refinement.}
Prediction errors are compared against a percentile threshold $\tau$, initially
set to a baseline percentile $p_0$. Points with $e_t \le \tau$ are deemed
well-explained by the current local model and are permanently assigned to the
current iteration, while the remaining points stay in the focus set and are
reconsidered in subsequent iterations. When threshold updating is enabled, the
percentile level is increased by a fixed increment $\Delta p$ after each
iteration, thereby relaxing the acceptance criterion and promoting progressive
merging of stable regions. Iteration continues until all points are assigned or
a maximum number of iterations is reached.

This procedure yields a piecewise-linear local trend signal in which:
(i) smooth segments form broad regimes,
(ii) abrupt changes appear as short segments or boundary splits, and
(iii) repeated local patterns—even with drifting or irregular timing—naturally
function as emergent seasonal behavior.

\paragraph{Output.}
The algorithm returns (i) trend labels
$\mathbf{m} \in \{-1,1,\dots,K_{\max}\}^T$ indicating the iteration at which each
point is assigned (with $-1$ denoting unassigned points), (ii) the reconstructed
local-trend values $\hat{\mathbf{y}}$, and (iii) the collection of fitted local
trend functions. These outputs form the local-trend component of the full \textsc{LGTD}
decomposition.

\paragraph{Algorithm.}
Algorithm~\ref{algo:autotrend_llt} summarizes the adaptive local linear trend
extraction procedure described above.

\begin{algorithm}[t]
\caption{AutoTrend-LLT: Adaptive Local Linear Trend Decomposition}
\label{algo:autotrend_llt}
\DontPrintSemicolon

\KwIn{Series $y_0,\dots,y_{T-1}$; window size $w$; max iterations $K_{\max}$;\\
      baseline percentile $p_0$; step $\Delta p$; flag \texttt{update\_threshold}.}
\KwOut{Trend labels $\mathbf{m}\in\{-1,1,\dots,K_{\max}\}^T$; local trend $\hat{\mathbf{y}}$; models $\mathcal{M}$.}

\BlankLine
\textbf{Initialize:}
$\mathbf{m} \leftarrow -1$; $\hat{\mathbf{y}} \leftarrow \text{NaN}$;
$\mathcal{M} \leftarrow [\,]$;\;
$\mathcal{F} \leftarrow \{w, w+1, \dots, T-1\}$; $p \leftarrow p_0$.\;

\BlankLine
\For{$k \leftarrow 1$ \KwTo $K_{\max}$}{
  \If{$\mathcal{F} = \emptyset$}{\textbf{break}}

  Partition $\mathcal{F}$ into disjoint contiguous ranges
  $\mathcal{R}=\{[s_j,e_j]\}$.\;
  $E \leftarrow [\,]$;\;
  Initialize $\tilde{\mathbf{y}}$ as an array of length $T$ with $\text{NaN}$.\;

  \ForEach{$[s,e] \in \mathcal{R}$}{
    $t_{\text{start}} \leftarrow \max(0,\, s - w)$,\quad
    $t_{\text{end}} \leftarrow \min(e-1,\, t_{\text{start}} + w - 1)$.\;

    Fit linear regression $h_{k,[s,e]}(t) = a_{k,[s,e]} t + b_{k,[s,e]}$ on
    $\{(t,y_t) : t_{\text{start}} \le t \le t_{\text{end}}\}$.\;
    Append $h_{k,[s,e]}$ to $\mathcal{M}$.\;

    \For{$t \leftarrow s$ \KwTo $e$}{
      $\tilde{\mathbf{y}}[t] \leftarrow h_{k,[s,e]}(t)$;\;
      Append $|y_t - \tilde{\mathbf{y}}[t]|$ to $E$.\;
    }
  }

  $\tau \leftarrow \text{percentile}(E, p)$.\;
  $\mathcal{L} \leftarrow \{t \in \mathcal{F} : |y_t - \tilde{\mathbf{y}}[t]| \le \tau\}$;\;
  $\mathcal{H} \leftarrow \mathcal{F} \setminus \mathcal{L}$.\;

  \ForEach{$t \in \mathcal{L}$}{
    $\mathbf{m}[t] \leftarrow k$;\;
    $\hat{\mathbf{y}}[t] \leftarrow \tilde{\mathbf{y}}[t]$;\;
  }
  $\mathcal{F} \leftarrow \mathcal{H}$.\;

  \If{\texttt{update\_threshold}}{
    $p \leftarrow p + \Delta p$.\;
  }
}

\Return $(\mathbf{m}, \hat{\mathbf{y}}, \mathcal{M})$.\;

\end{algorithm}

\subsection{Computational Complexity}
\label{subsec:complexity}

The runtime of \textsc{LGTD} consists of (i) fitting the global trend and
(ii) extracting local trends via AutoTrend-LLT. Global trend estimation using a
regression or smoothing model $\mathcal{G}$ costs between $O(T)$ and
$O(T \log T)$ for common linear or spline-based methods.

The LLT stage performs $K^\star$ refinement iterations
(Section~\ref{subsec:termination}). In iteration $k$, one local linear
regression is fitted for each contiguous focus range, using a fixed-size window
of length $w$. Since $w$ is constant, the total cost of fitting all local models
is $O(|\mathcal{R}^{(k)}|)$, where $\mathcal{R}^{(k)}$ denotes the set of focus
ranges. Predictions and error evaluations are performed over the current focus
set $\mathcal{F}^{(k)}$, costing $O(|\mathcal{F}^{(k)}|)$. Because
$|\mathcal{R}^{(k)}| \le |\mathcal{F}^{(k)}|$ and
$|\mathcal{F}^{(k+1)}| \le |\mathcal{F}^{(k)}|$, the total LLT cost satisfies

\begin{equation}
    O\!\left(\sum_{k=1}^{K^\star} |\mathcal{F}^{(k)}|\right)
    = O(K^\star T),
\end{equation}

with $K^\star$ typically small (empirically $<10$). Thus, the overall complexity of \textsc{LGTD} is $O(K^\star T)$, i.e., linear in
the length of the time series.

\paragraph{Comparison with STL.}
Classical STL repeatedly applies local regression smoothers and relies on
multiple inner and outer iterative loops. While each pass costs $O(T)$,
convergence typically requires many iterations (often 20--50), and runtime is
tied to the user-specified season length through the configuration of the
smoothing steps. In contrast, \textsc{LGTD} has no dependence on a predefined
season length: local trends are inferred automatically, and runtime depends only
on $T$ and the small iteration count $K^\star$. As a result, \textsc{LGTD}
achieves linear-time scaling comparable to STL while removing the dominant tuning
parameter and avoiding fixed season-length assumptions.

\subsection{Finite-Termination Guarantee}
\label{subsec:termination}

We provide a finite-termination guarantee for AutoTrend-LLT, establishing an
explicit upper bound on the number of refinement iterations and supporting the
linear-time complexity of \textsc{LGTD}.

\begin{proposition}[Termination Bound for AutoTrend-LLT]
\label{prop:llt_termination}
Consider Algorithm~\ref{algo:autotrend_llt} with \texttt{update\_threshold}
enabled, baseline percentile $p_0 \in (0,100]$, and step size $\Delta p > 0$.
Assume the percentile operator is clamped so that
$\text{percentile}(E,p)=\max(E)$ for all $p \ge 100$. Define

\begin{equation}
K^\star \;:=\; 1 + \left\lceil \frac{100 - p_0}{\Delta p} \right\rceil .
\end{equation}

If $K_{\max} \ge K^\star$, then AutoTrend-LLT assigns every index
$t \in \{w,\dots,T-1\}$ a finite label by iteration $K^\star$ and terminates with
an empty focus set.
\end{proposition}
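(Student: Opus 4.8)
The plan is to track the percentile value $p$ as an explicit function of the iteration index $k$, show that it first reaches or exceeds $100$ precisely at iteration $K^\star$, and then invoke the clamping assumption to force the entire remaining focus set to be absorbed in that single step. The argument is purely combinatorial: it never needs any estimate on the data or on the quality of the regression fits.

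First I would establish the threshold schedule by a one-line induction. Since $p$ is initialized to $p_0$ and incremented by $\Delta p$ at the end of each iteration (because \texttt{update\_threshold} is enabled), the value of $p$ used to compute $\tau$ at iteration $k$ is $p_0 + (k-1)\Delta p$. I would also record the monotonicity invariant $\mathcal{F} \leftarrow \mathcal{H} = \mathcal{F}\setminus\mathcal{L} \subseteq \mathcal{F}$, so the focus set is non-increasing and satisfies $\mathcal{F} \subseteq \{w,\dots,T-1\}$ throughout; once empty it stays empty and the algorithm has terminated.

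Second I would solve for the first iteration at which the percentile crosses $100$. Setting $p_0 + (k-1)\Delta p \ge 100$ and using $\Delta p > 0$ together with the fact that $k-1$ is an integer gives $k-1 \ge (100-p_0)/\Delta p$, hence the smallest admissible $k$ is $1 + \lceil (100-p_0)/\Delta p\rceil = K^\star$. Provided the focus set has not already emptied at some earlier iteration (in which case the claim holds with an even smaller terminal index), the loop reaches iteration $K^\star$ because $K_{\max}\ge K^\star$. Third, I would close the argument at iteration $K^\star$: at its start $\mathcal{F}\neq\emptyset$, so the partition into contiguous ranges is nonempty and the inner loop appends at least one absolute error to $E$; thus $E\neq\emptyset$ and $\max(E)$ is well-defined and finite. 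Since $p = p_0 + (K^\star-1)\Delta p \ge 100$, the clamping assumption yields $\tau = \text{percentile}(E,p) = \max(E)$, so every $t\in\mathcal{F}$ satisfies $|y_t - \tilde{\mathbf{y}}[t]| \le \max(E) = \tau$. Therefore $\mathcal{L} = \mathcal{F}$, each such index receives the finite label $K^\star$, and $\mathcal{F}\leftarrow\mathcal{H}=\emptyset$; combined with the earlier-assigned indices, every element of $\{w,\dots,T-1\}$ carries a finite label $\le K^\star$.

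I expect the only delicate point to be the indexing bookkeeping — confirming that the percentile at iteration $k$ is $p_0 + (k-1)\Delta p$ rather than $p_0 + k\Delta p$, since the increment is applied at the end of the loop body — together with the ceiling arithmetic that identifies the smallest admissible $k$ with the stated bound $K^\star$. Everything else is unconditional: finiteness of each error guarantees $\max(E)<\infty$, and the clamping makes the final absorbing step independent of the particular regime structure or fit errors.
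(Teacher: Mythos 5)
Your proposal is correct and follows essentially the same route as the paper's own proof: derive the schedule $p^{(k)} = p_0 + (k-1)\Delta p$, note that $p^{(K^\star)} \ge 100$ by the definition of $K^\star$, and invoke the clamping to get $\tau = \max(E)$ so that the entire remaining focus set is absorbed at iteration $K^\star$. Your additional bookkeeping (monotonicity of the focus set, non-emptiness of $E$, and the ceiling arithmetic identifying $K^\star$ as the first crossing) fills in details the paper leaves implicit but introduces no new ideas.
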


\begin{proof}
With \texttt{update\_threshold} enabled, the percentile schedule satisfies
$p^{(k)} = p_0 + (k-1)\Delta p$. By definition of $K^\star$, we have
$p^{(K^\star)} \ge 100$, so the threshold at iteration $K^\star$ equals
$\max(E)$ over $E=\{e_t : t\in\mathcal{F}^{(K^\star)}\}$. Consequently, all
remaining focused indices are classified as low-error and assigned in that
iteration, leaving an empty focus set and causing termination.
\end{proof}

This bound shows that AutoTrend-LLT always terminates in a finite number of
iterations determined solely by the percentile schedule $(p_0,\Delta p)$ and is
independent of any season length or domain-specific parameter.

\section{Experimental Evaluation}
\label{sec:expset}

We evaluate \textsc{LGTD} against seven representative state-of-the-art time
series decomposition methods on both synthetic datasets with known ground-truth
components and real-world benchmarks exhibiting non-stationary temporal
dynamics. The evaluation focuses on three key aspects: (i) decomposition
accuracy, (ii) robustness to evolving periodic structure, and (iii)
adaptability across diverse temporal patterns.

\subsection{Baseline Methods}

We compare \textsc{LGTD} with three categories of decomposition approaches.

\textbf{Classical methods.}
STL \cite{cleveland1990stl} performs seasonal--trend decomposition using
iterative LOESS smoothing under a user-specified fixed periodicity. STR
\cite{dokumentov2022str} extends STL to multiple seasonal components via
seasonal--trend regression. Both methods assume stationary seasonality and
require explicit period specification.

\textbf{Robust methods.}
FastRobustSTL \cite{10.1609/aaai.v33i01.33015409} improve robustness
to outliers using $\ell_1$ regression and non-local seasonal filtering. ASTD
\cite{10.1007/978-3-031-70344-7_25} adaptively estimates time-varying periodicity
via a sliding discrete Fourier transform, making it suitable for non-stationary
and streaming scenarios.

\textbf{Online methods.}
ASTD$_{\text{Online}}$, OnlineSTL \cite{10.14778/3523210.3523219}, and
OneShotSTL \cite{10.14778/3583140.3583155} perform incremental decomposition and
are designed for real-time or streaming data processing.

All baselines are evaluated using their recommended default configurations. For
methods requiring period specification, ground-truth periods are provided when
available to establish an upper-bound performance reference. All hyperparameter
configurations and implementation details are provided in Appendix, and all
experiments are fully reproducible using the code released in the accompanying
repository.

\subsection{Synthetic Benchmarks}

We construct a controlled synthetic benchmark suite using a $3 \times 3$
factorial design that crosses three trend patterns with three seasonality
regimes, yielding nine datasets with known ground-truth components.
Each time series contains 2000 observations with additive Gaussian noise
($\sigma = 1.0$).

\paragraph{Trend patterns.}
We consider three representative trend structures:
\textbf{Linear} trends capturing monotonic growth,
\textbf{Inverted-V} trends modeling rise-and-fall dynamics, and
\textbf{Piecewise} trends with abrupt regime shifts that represent structural
breaks.

\paragraph{Seasonality regimes.}
Each trend pattern is combined with one of three seasonal behaviors:
\textbf{Fixed} seasonality with stationary cycles,
\textbf{Transitive} seasonality characterized by abrupt mid-series changes, and
\textbf{Variable} seasonality with continuously drifting cycle lengths that
emulate quasi-periodic real-world phenomena.

\paragraph{Benchmark design rationale.}
This factorial construction isolates complementary sources of temporal
complexity—trend nonlinearity and seasonal nonstationarity—enabling systematic
analysis of robustness, adaptability, and failure modes across competing
decomposition methods.

\subsection{Real-World Benchmarks}

\textbf{ETDataset} \cite{haoyietal-informer-2021} (ETTh1, ETTh2) consists of
hourly electricity transformer load measurements characterized by overlapping
daily and weekly seasonalities, long-term consumption trends, and regime shifts
arising from evolving usage patterns. These multi-scale periodic structures
challenge methods relying on a single fixed periodicity.

\textbf{Sunspot} \cite{silso2015sunspot} records historical solar activity with
quasi-periodic cycles of irregular length (8--15 years) and varying amplitude.
The absence of stable periodicity provides a stringent test for decomposition
methods under fundamentally non-stationary seasonality.

\subsection{Evaluation Protocol}

\paragraph{Synthetic datasets.}
For synthetic datasets with known ground-truth components, decomposition accuracy
is evaluated quantitatively using mean absolute error (MAE)\footnote{
The MAE is defined as
$
\text{MAE} = \frac{1}{n} \sum_{i=1}^{n} \left| \hat{y}_i - y_i \right|,
$
where $y_i$ and $\hat{y}_i$ denote the ground-truth and estimated component values,
respectively.
}, computed separately for the trend, seasonal, and residual components.
We additionally report an overall MAE obtained by averaging errors across all
components.

\paragraph{Real-world datasets.}
For real-world datasets, where ground-truth decompositions are unavailable,
evaluation relies on qualitative inspection and quantitative residual diagnostics.
Specifically, we assess remaining temporal dependence in the residual component
using the Ljung--Box portmanteau test\footnote{
The Ljung--Box test evaluates the null hypothesis that a
time series exhibits no autocorrelation up to a specified lag $h$.
The test statistic is
$\textstyle Q_h = n(n+2)\sum_{k=1}^{h}\allowbreak \hat{\rho}_k^2/(n-k)$,
where $n$ is the sample size and $\hat{\rho}_k$ is the sample autocorrelation at
lag $k$.
Under the null hypothesis of serial independence, $Q_h$ asymptotically follows a
$\chi^2$ distribution with $h$ degrees of freedom.
}~\cite{ljung1978measure}.
We report Ljung--Box statistics at multiple lags ($Q_{10}$, $Q_{20}$, and $Q_{30}$),
where lower values indicate weaker residual serial dependence and more effective
removal of structured temporal dynamics.

\section{Results}
\label{sec:results}

\subsection{Synthetic Data results}

We evaluate decomposition accuracy on synthetic datasets that vary trend shape,
season-length behavior, and noise structure. Performance is measured using mean
absolute error (MAE) on the trend, seasonal, and residual components, as well as
overall reconstruction error.

\begin{table}[htbp]
\centering
\setlength{\tabcolsep}{6pt}
\renewcommand{\arraystretch}{1}
\caption{Average MAE Across All Datasets}
\vspace{-4pt}
\label{tb:SynOverallPer}
\begin{tabular}{lcccc}
\toprule
\textbf{Model} & \textbf{Trend} & \textbf{Seasonal} & \textbf{Residual} & \textbf{Overall} \\
\midrule
STL & 4.07 & 16.46 & 13.64 & 11.39 \\
STR & 20.64 & 23.90 & 4.21 & 16.25 \\
FastRobustSTL & 20.80 & 21.20 & \textbf{1.12} & 14.37 \\
$ASTD$ & 21.02 & 16.27 & 11.28 & 16.19 \\
OnlineSTL & 8.51 & 6.89 & 5.40 & 6.93 \\
OneShotSTL & 33.52 & 32.20 & 7.75 & 24.49 \\
$ASTD_{Online}$ & 7.57 & 10.90 & 11.41 & 9.96 \\
LGTD & \textbf{3.62} & \textbf{4.84} & 3.55 & \textbf{4.00} \\
\bottomrule
\end{tabular}
\end{table}

\paragraph{Overall performance.}
Table~\ref{tb:SynOverallPer} reports MAE averaged across all synthetic datasets.
\textsc{LGTD} achieves the lowest overall MAE, with the best trend and seasonal
accuracy.
While some baselines perform well on individual components, they incur larger
errors elsewhere, leading to higher overall error.
In contrast, \textsc{LGTD} maintains balanced accuracy across all components
without requiring a predefined season length.

Notably, LGTD does not optimize specifically for any single component, but instead minimizes cross-component interference, leading to lower overall error.

\begin{table}[htbp]
\centering
\setlength{\tabcolsep}{6pt}
\renewcommand{\arraystretch}{1}
\caption{Average MAE for Transitive Period Datasets}
\vspace{-4pt}
\label{tb:SynTrans}
\begin{tabular}{lcccc}
\toprule
\textbf{Model} & \textbf{Trend} & \textbf{Seasonal} & \textbf{Residual} & \textbf{Overall} \\
\midrule
STL & \textbf{1.44} & 12.32 & 12.31 & 8.69 \\
STR & 25.26 & 32.69 & 9.24 & 22.40 \\
FastRobustSTL & 25.11 & 25.56 & \textbf{1.36} & 17.35 \\
$ASTD$ & 21.00 & 16.18 & 10.51 & 15.89 \\
OnlineSTL & 9.16 & 7.37 & 5.32 & 7.28 \\
OneShotSTL & 55.85 & 54.70 & 5.11 & 38.56 \\
$ASTD_{Online}$ & 7.51 & 9.71 & 10.17 & 9.13 \\
LGTD & 2.04 & \textbf{4.06} & 2.83 & \textbf{2.98} \\
\bottomrule
\end{tabular}
\end{table}

\paragraph{Transitive season-length datasets.}
Results for datasets with stable season length are shown in
Table~\ref{tb:SynTrans}.
STL performs well on trend estimation but exhibits higher seasonal and residual
errors.
Adaptive and online methods improve robustness, yet still depend on explicit or
estimated season length.
\textsc{LGTD} attains the lowest overall MAE with consistently low component-wise
error, demonstrating competitive performance even when periodic structure is
stable.

\begin{figure}[t]
    \centering
    \includegraphics[width=\linewidth]{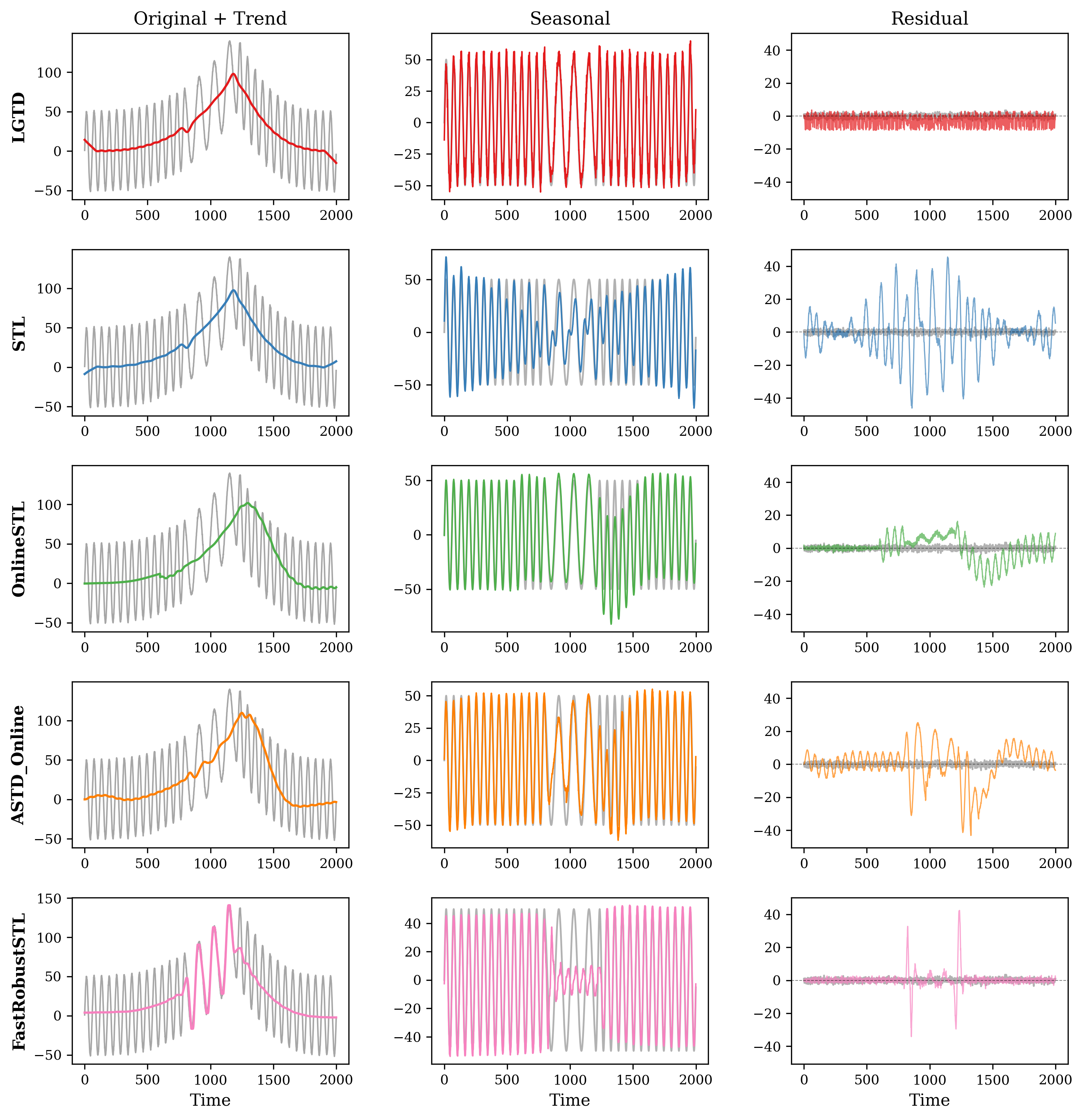}
    \caption{Decomposition on synthetic transitive season-length data, where LGTD preserves smooth trends and coherent emergent seasonality under changing cycles, while period-based methods exhibit distortion and residual leakage.}
    \label{fig:SyntrnsRes}
\end{figure}

Fig.~\ref{fig:SyntrnsRes} illustrates decomposition on transitive season-length
data, where the underlying seasonal structure gradually shifts.
LGTD adapts smoothly to these transitions, preserving a stable global trend and
low-structure residuals, while period-based methods exhibit seasonal distortion
and increased residual variance.

\begin{table}[htbp]
\centering
\setlength{\tabcolsep}{6pt}
\renewcommand{\arraystretch}{1}
\caption{Average MAE for Variable Period Datasets}
\vspace{-4pt}
\label{tb:SynVar}
\begin{tabular}{lcccc}
\toprule
\textbf{Model} \footnotesize& \textbf{Trend} & \textbf{Seasonal} & \textbf{Residual} & \textbf{Overall} \\
\midrule
STL & 10.57 & 36.77 & 28.19 & 25.17 \\
STR & 35.22 & 37.40 & 2.60 & 25.08 \\
FastRobustSTL & 37.00 & 37.13 & \textbf{1.08} & 25.07 \\
$ASTD$ & 21.02 & 17.80 & 14.15 & 17.66 \\
OneShotSTL & 40.65 & 38.22 & 17.10 & 31.99 \\
OnlineSTL & 10.49 & 8.79 & 7.38 & 8.89 \\
$ASTD_{Online}$ & 8.40 & 15.11 & 15.76 & 13.09 \\
LGTD & \textbf{7.63} & \textbf{7.78} & 5.33 & \textbf{6.91} \\
\bottomrule
\end{tabular}
\end{table}

\paragraph{Variable season-length datasets.}
Table~\ref{tb:SynVar} summarizes results for datasets with drifting or irregular
season length.
Here, the performance gap widens: methods relying on fixed or estimated season
length show marked degradation in seasonal and overall MAE.
OnlineSTL remains relatively robust, but still exhibits higher trend and
seasonal errors than \textsc{LGTD}.
\textsc{LGTD} achieves the lowest overall MAE and best seasonal accuracy,
highlighting the robustness of its period-free formulation under nonstationary
temporal patterns.

Fig.~\ref{fig:SynVarRes} further shows that LGTD consistently recovers smooth
global trends while adapting seasonal structure to changing cycle length and
amplitude.
In contrast, period-based methods exhibit component leakage and higher-variance
residuals as season length drifts.

Overall, the synthetic experiments demonstrate that \textsc{LGTD} provides stable
and balanced decomposition accuracy across both stable and drifting season-length
regimes, supporting its use as a low-touch decomposition primitive in
heterogeneous settings.

\begin{figure}[t]
    \centering
    \includegraphics[width=\linewidth]{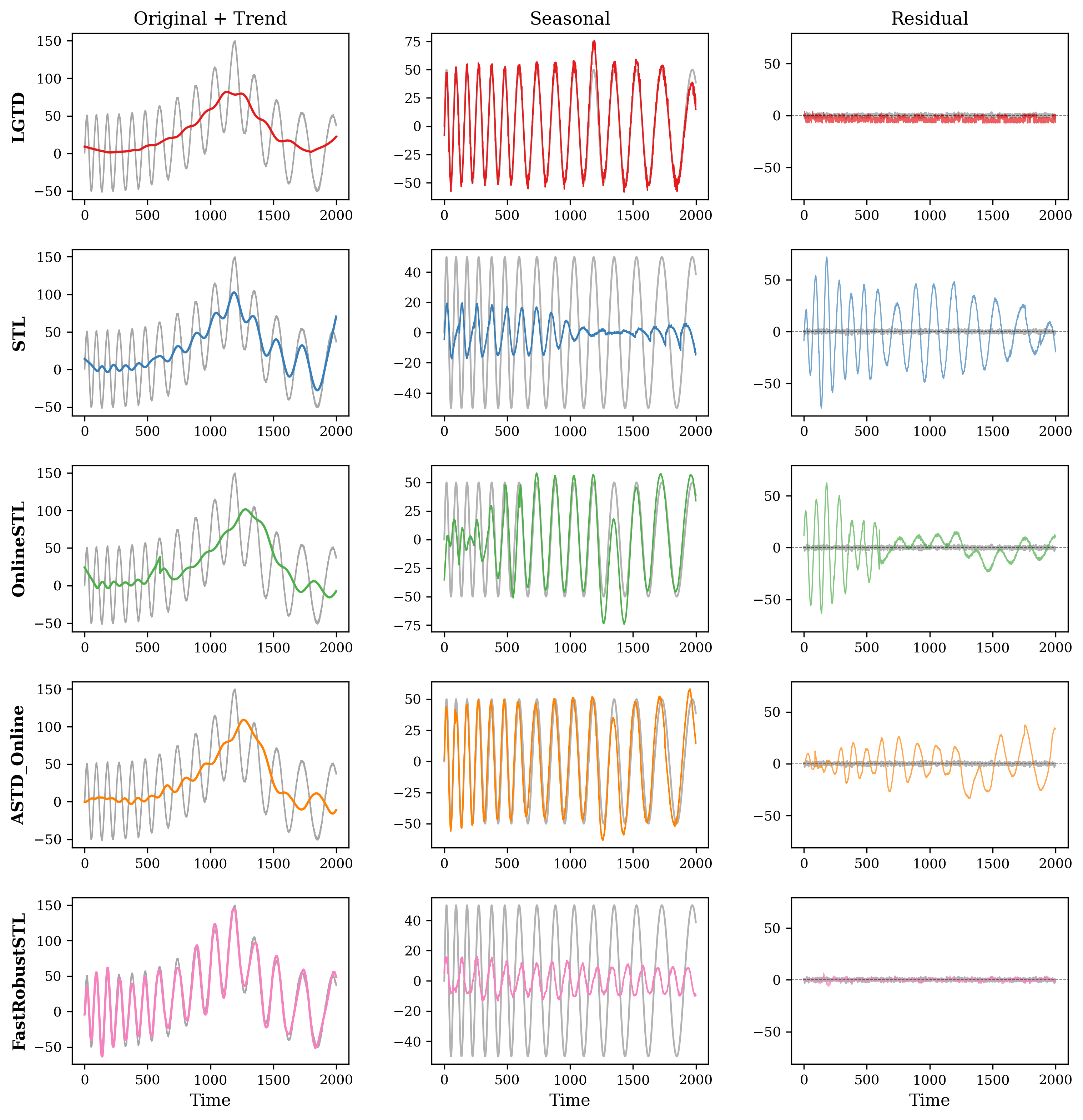}
    \caption{Decomposition results on synthetic variable season-length data, showing that LGTD robustly captures the evolving global trend and adapts to changing seasonal structure via emergent local trends, while maintaining low-variance residuals compared to period-based baselines that struggle under nonstationary season lengths.}
    \label{fig:SynVarRes}
\end{figure}

In the variable season-length setting, Fig.~\ref{fig:SynVarRes} shows LGTD consistently recovers a smooth global trend while adapting the seasonal component to gradual changes in cycle length and amplitude, without relying on a fixed seasonal assumption. In contrast, period-based methods exhibit leakage between seasonal and residual components when the season length drifts, leading to less stable trend estimates and higher-variance residuals.

\subsection{Real-World Data Results}
We evaluate \textsc{LGTD} on real-world datasets with complex trends, drifting
cycles, and nonstationary dynamics. Performance is assessed through qualitative
inspection of component separation and quantitative residual diagnostics using
the Ljung--Box portmanteau test.

\begin{figure}[t]
    \centering
    \includegraphics[width=\linewidth]{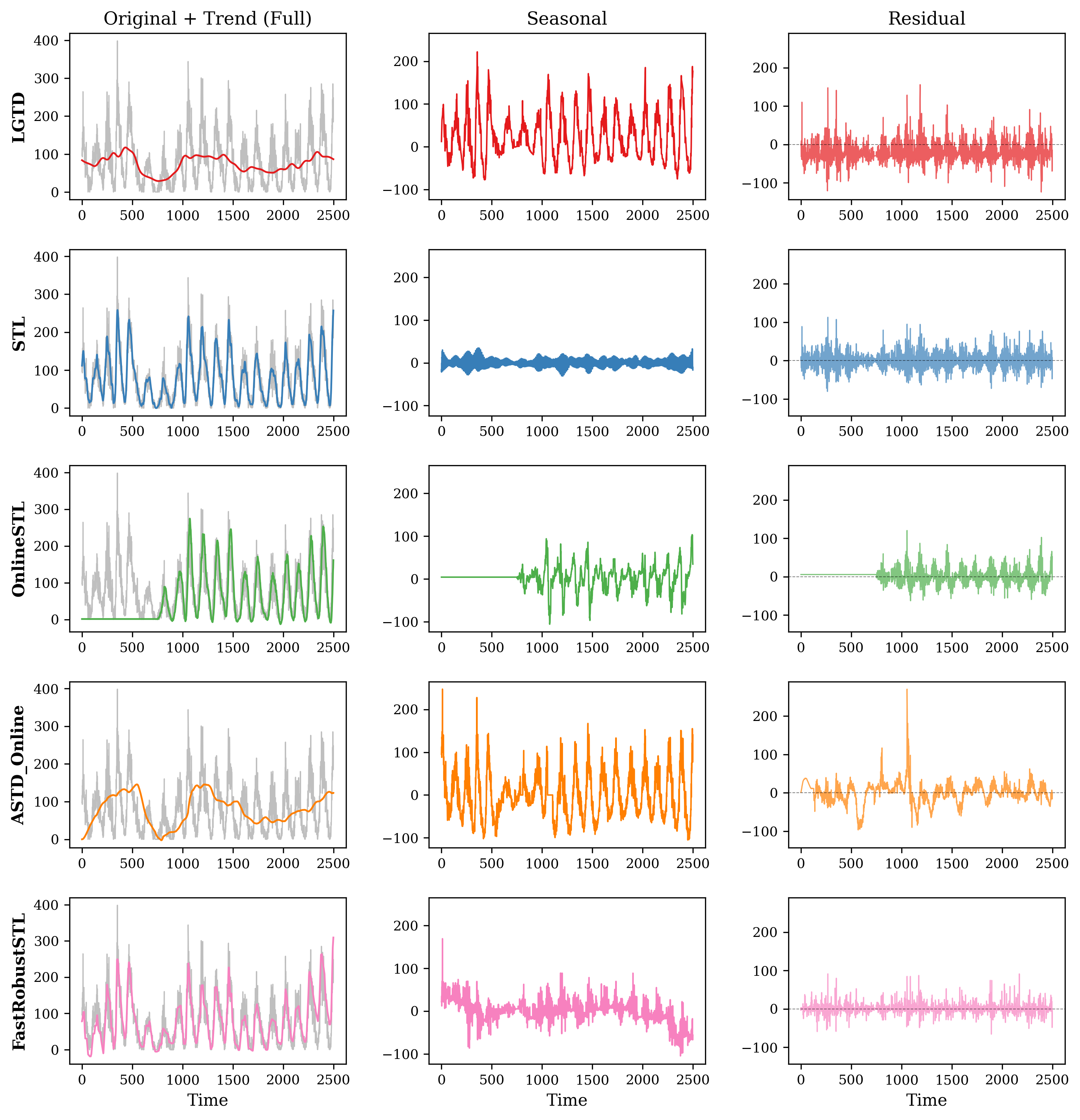}
    \caption{Decomposition of the SILSO sunspot series, showing that LGTD captures long-term solar activity trends and quasi-periodic emergent seasonality with drifting cycles, while producing low-structure residuals without assuming a fixed season length.}
    \label{fig:SunspotRes}
\end{figure}

Fig.~\ref{fig:SunspotRes} shows decompositions on the SILSO sunspot series, which
exhibits quasi-periodic behavior with drifting cycle length and amplitude.
\textsc{LGTD} recovers meaningful long-term trends and emergent seasonal
structure without imposing a fixed periodic template, while more effectively
suppressing residual autocorrelation than period-based baselines.

\begin{figure}[t]
    \centering
    \includegraphics[width=\linewidth]{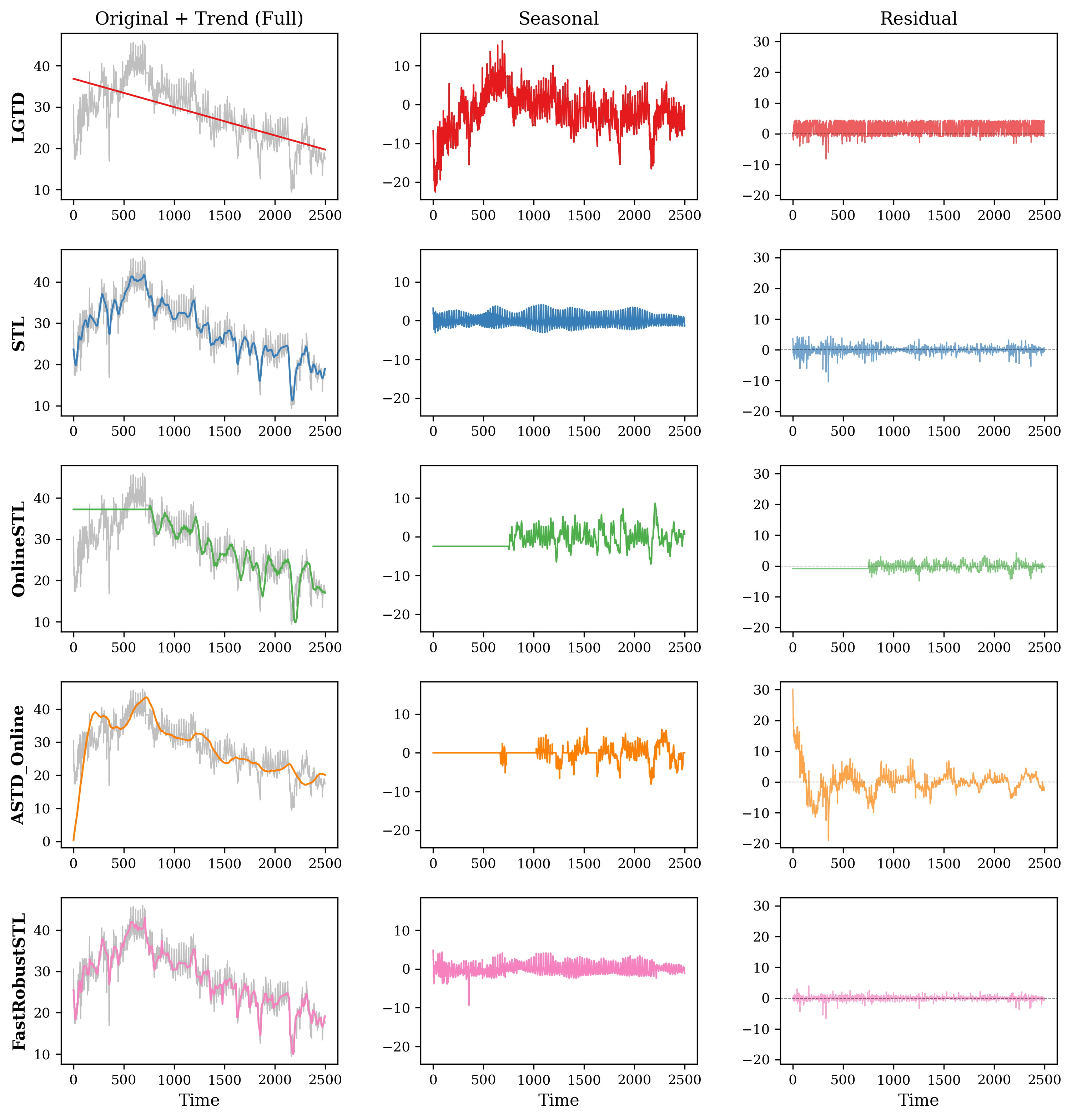}
    \caption{Decomposition of the ETTh1 dataset, illustrating that LGTD extracts a smooth global trend and structured emergent seasonality while leaving low-variance residuals, without requiring a predefined season length.}
    \label{fig:ETTh1Res}
\end{figure}

Fig.~\ref{fig:ETTh1Res} presents results on the ETTh1 electricity transformer
dataset (ETTh2 exhibits similar behavior).
\textsc{LGTD} extracts smooth global trends and coherent emergent seasonality
capturing dominant consumption patterns, while adapting to local variations in
seasonal strength and regime shifts.
This yields residuals with low variance and minimal remaining temporal structure
compared to STL-based methods.

\begin{figure}[t]
    \centering
    \includegraphics[width=\columnwidth]{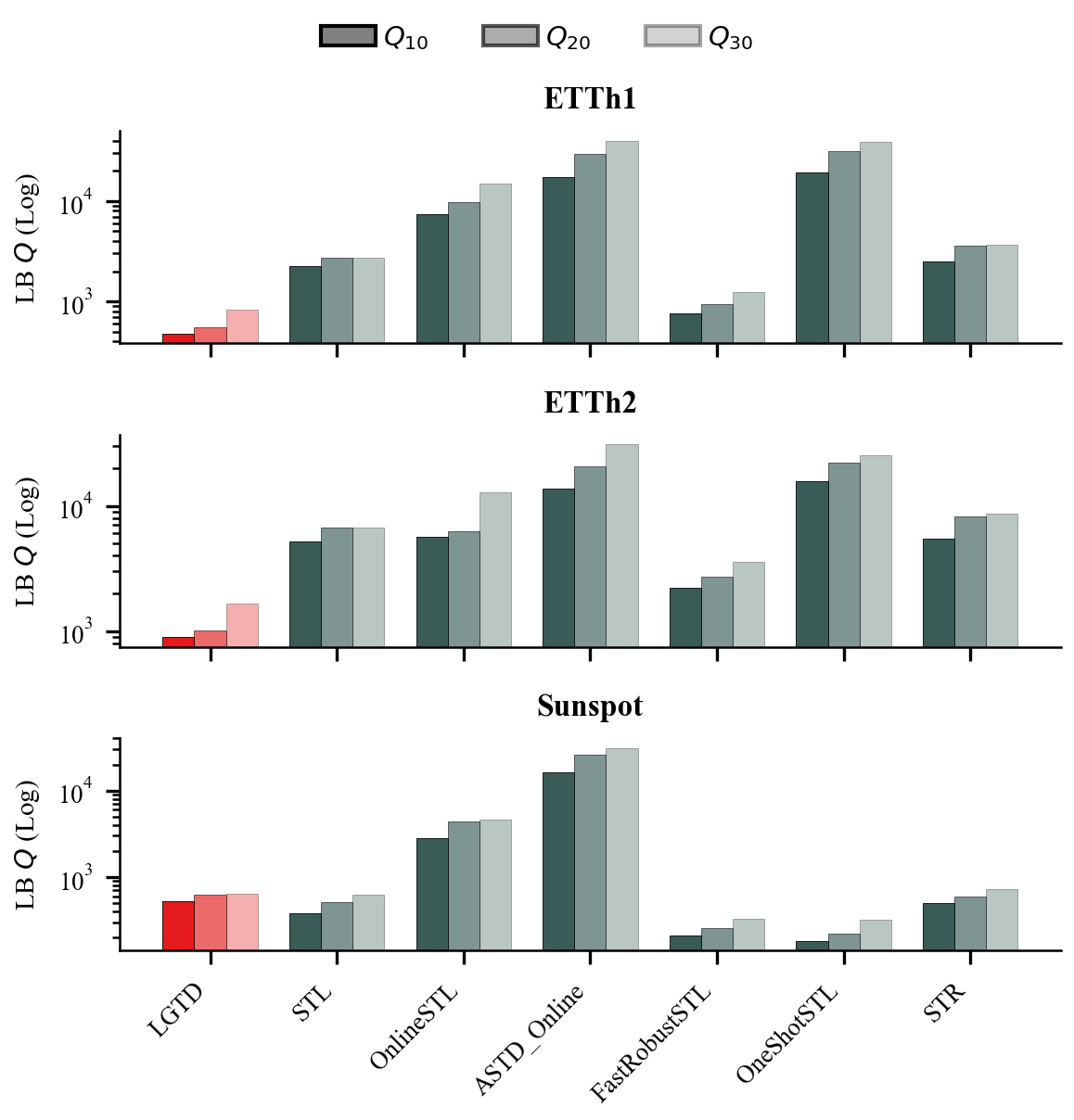}
    \caption{Ljung--Box portmanteau test statistics computed on residual components obtained from different trend--seasonal decomposition methods across the ETTh1, ETTh2, and Sunspot datasets. The statistics are evaluated at lags $10$, $20$, and $30$ ($Q_{10}$, $Q_{20}$, and $Q_{30}$) and reported on a logarithmic scale. Lower $Q$ values indicate weaker residual serial dependence, suggesting more effective removal of temporal structure by the decomposition method.}
    \label{fig:RealQuant}
\end{figure}

Fig.~\ref{fig:RealQuant} reports Ljung--Box statistics on residuals across ETTh1,
ETTh2, and Sunspot.
Across all datasets and evaluated lags, \textsc{LGTD} consistently attains the
lowest $Q$ values, indicating minimal residual serial dependence after
decomposition.
In contrast, period-based and online methods exhibit substantially higher $Q$
statistics, particularly at larger lags, revealing persistent temporal structure
in the residuals.

Overall, these results demonstrate that \textsc{LGTD} more effectively isolates
stochastic components in real-world time series with irregular patterns and
drifting dynamics.

\subsection{Scalability}
\label{sec:scalability}

Beyond decomposition quality, a practical decomposition primitive must scale to
long series with predictable cost. We therefore measure wall-clock runtime and
peak memory as a function of series length $N$. Synthetic series (linear trend,
fixed-period sinusoid, Gaussian noise) are generated for
$N \in \{10^3,\dots,10^5\}$, and each method's \texttt{fit\_transform} is timed
in isolation. Period-based methods receive a fixed season length ($P=120$)
independent of $N$---their best case---while \textsc{LGTD} and ASTD use no period.
For each $(\text{method}, N)$ we discard one warmup run and report the median of
five timed runs to mitigate scheduling variation; memory is the peak Python-side
allocation measured separately via \texttt{tracemalloc}. To fix the configuration
under study, \textsc{LGTD} uses a linear global trend, since its
\textsc{AutoTrend-LLT} stage---which carries the $O(K\,T)$ bound of
Section~\ref{subsec:complexity}---is independent of the global-trend model. All
experiments were run on a 2024 MacBook Air (13-inch) with an Apple M3 chip
(8-core CPU) and 16\,GB of unified memory, under macOS. RobustSTL and FastRobustSTL are omitted from the scaling study; both are
robust optimization-based methods with super-linear cost (per-iteration
$O(N^2)$ for RobustSTL~\cite{10.1609/aaai.v33i01.33015409}, reduced to
$O(N\log N)$ by FastRobustSTL~\cite{10.1145/3394486.3403271}), placing them
in the same regime already demonstrated by STR; they therefore add no further
information about asymptotic scaling.

\begin{table}[htbp]
\centering
\setlength{\tabcolsep}{5pt}
\renewcommand{\arraystretch}{1}
\caption{Runtime (s, median) at two series lengths and peak memory at
$N{=}10^{5}$. ``---'' denotes runs that did not complete within the time budget.}
\vspace{-4pt}
\label{tb:Scalability}
\begin{tabular}{lccc}
\toprule
\textbf{Model} & \textbf{Time @ $10^{4}$} & \textbf{Time @ $10^{5}$} & \textbf{Mem @ $10^{5}$ (MB)} \\
\midrule
STL                 & 0.380   & 3.761  & 13.0 \\
STR                 & 214.85  & ---    & ---  \\
$ASTD$              & 1.216   & 76.455 & 16.9 \\
OnlineSTL           & 1.056   & 10.112 & 8.9  \\
OneShotSTL$^\dagger$ & 0.328   & 0.790  & ---  \\
LGTD                & \textbf{0.009} & \textbf{0.080} & 17.5 \\
\bottomrule
\end{tabular}
\end{table}

\begin{figure*}[t]
    \centering
    \includegraphics[width=\textwidth]{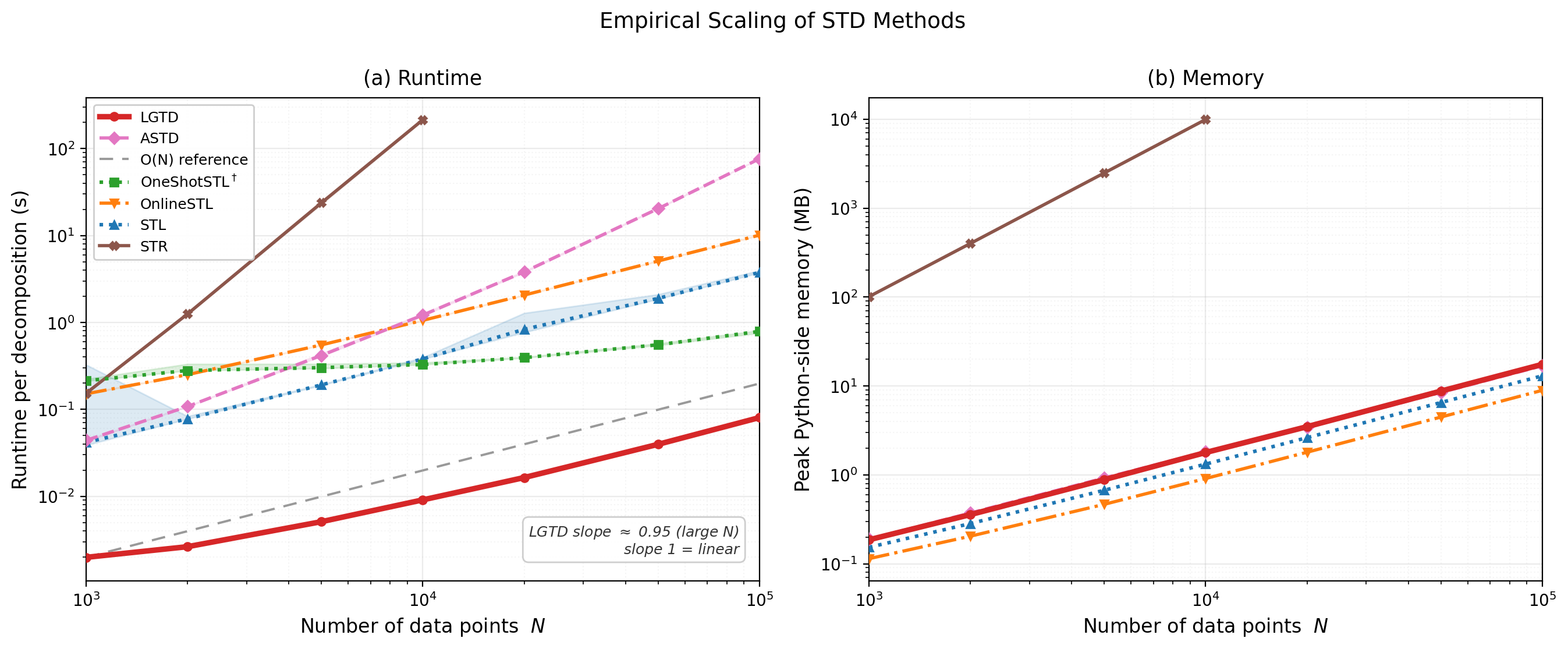}
    \caption{Empirical scaling with series length $N$ (log--log axes; slope $1$
    corresponds to linear growth). \textbf{(a)} Runtime per decomposition.
    \textsc{LGTD} tracks the $O(N)$ reference (fitted large-$N$ slope $\approx
    0.95$) and is the fastest method across the entire range, while STR and ASTD
    grow super-linearly. \textbf{(b)} Peak Python-side memory; \textsc{LGTD}, STL,
    and OnlineSTL grow linearly, whereas STR's global regression reaches roughly
    $10$\,GB already at $N{=}10^{4}$. $^\dagger$OneShotSTL runs as a Java
    subprocess; its runtime is dominated by per-call JVM startup at these lengths
    and is therefore not directly comparable to the in-process methods, and it is
    omitted from~(b) because \texttt{tracemalloc} cannot observe the JVM heap.}
    \label{fig:Scalability}
\end{figure*}

\paragraph{Runtime.}
Fig.~\ref{fig:Scalability}(a) and Table~\ref{tb:Scalability} show that
\textsc{LGTD} scales linearly in $N$, with a fitted log--log slope of $0.95$ over
the large-$N$ regime, empirically confirming the $O(K\,T)$ analysis of
Section~\ref{subsec:complexity}. \textsc{LGTD} is the fastest method throughout: at
$N{=}10^{5}$ it is roughly $47\times$ faster than STL, $126\times$ faster than
OnlineSTL, and nearly three orders of magnitude faster than ASTD. Although STL is
also linear in $N$, \textsc{LGTD} attains a substantially smaller constant and,
unlike STL, requires no season-length parameter. The clearest separation is
asymptotic rather than constant-factor: STR and ASTD scale super-linearly, with
STR failing to complete $N{=}10^{5}$ within the time budget---already at
$N{=}10^{4}$ it requires over $200$\,s---reflecting the cost of its global
regression formulation.

\paragraph{Memory.}
Fig.~\ref{fig:Scalability}(b) shows that \textsc{LGTD} uses memory linear in $N$
($\approx 17$\,MB at $N{=}10^{5}$), comparable to STL and OnlineSTL, as expected
for a batch method that stores the input together with its component arrays. STR
is again the outlier, with peak memory near $10$\,GB at $N{=}10^{4}$, mirroring its
super-linear runtime. These results indicate that \textsc{LGTD} combines balanced
decomposition accuracy with linear time and memory and no dependence on a
season-length parameter, supporting its use as a low-touch primitive on long,
heterogeneous series.

\subsection{Parameter Sensitivity}
\textsc{LGTD} includes a small number of hyperparameters associated with local
trend inference, specifically the window size \( W \) and the error percentile
\( p \). Sensitivity analysis indicates that \textsc{LGTD} maintains stable
decomposition performance across a wide range of parameter settings, with error
varying smoothly and without abrupt degradation. The default configuration lies
within a broad low-error region across diverse data regimes, suggesting that
effective performance does not rely on precise hyperparameter tuning. This
behavior is consistent with the design of AutoTrend-LLT, where trend assignments
are driven by relative error ranking and iteratively refined, yielding
well-conditioned behavior with respect to parameter choice. Detailed sensitivity
results are provided in Appendix.

\subsection{Discussion}
The results demonstrate that \textsc{LGTD} addresses a central limitation of
period-based seasonal--trend decomposition methods: sensitivity to explicit or
estimated season length. Across synthetic datasets, \textsc{LGTD} achieves the
lowest or near-lowest overall MAE, with particularly strong gains in transitive
and variable season-length regimes where periodic assumptions break down
(Tables~\ref{tb:SynTrans} and~\ref{tb:SynVar}). In contrast, methods relying on
fixed or adaptive periods exhibit increased seasonal distortion and higher
overall error under nonstationary conditions.

By modeling seasonality as an emergent property of recurring local trend regimes,
\textsc{LGTD} maintains balanced accuracy across trend, seasonal, and residual
components, reducing cross-component leakage and yielding more stable
decompositions, as reflected in both quantitative metrics and visual results
(Figs.~\ref{fig:SyntrnsRes} and~\ref{fig:SynVarRes}). Real-world experiments on the
Sunspot and ETTh datasets further confirm this behavior: \textsc{LGTD} extracts
coherent trends and adaptive seasonal structure while producing residuals with
minimal serial dependence, as verified by the Ljung--Box diagnostics in
Fig.~\ref{fig:RealQuant}.

These accuracy benefits are achieved without sacrificing efficiency. As shown in
Section~\ref{sec:scalability}, \textsc{LGTD} scales linearly in series length in
both runtime and memory (Fig.~\ref{fig:Scalability}), remaining the fastest method
across all tested lengths while several baselines degrade super-linearly or exhaust
memory. Together with its lack of any season-length parameter, this makes
\textsc{LGTD} well suited as a low-touch decomposition primitive for long,
heterogeneous time series where per-series period tuning is impractical.

\paragraph{Limitations.}
\textsc{LGTD} prioritizes robustness to nonstationary and irregular temporal
structure by operating without explicit seasonal priors. In settings where
seasonality is strictly periodic and known in advance, classical methods such as
STL may achieve comparable trend accuracy with simpler parameterization.
Additionally, representing seasonality via regime recurrence rather than explicit
frequency components favors adaptive decomposition over direct harmonic
interpretability. Finally, \textsc{LGTD} is formulated as a batch method with
linear time and memory in the series length; while this is efficient, streaming
decomposition methods that maintain bounded per-update state may be preferable in
strictly online settings, and extending the emergent-seasonality formulation to an
incremental variant is a natural direction for future work. These trade-offs
reflect a deliberate design choice toward generality and stability across
heterogeneous time series.

\section{Conclusion}
We introduced \textsc{LGTD}, a season-length–free time-series decomposition
framework that models seasonality as an emergent consequence of recurring local
trend regimes rather than an explicit periodic prior. By combining smooth global
trend estimation with adaptive local linear trend inference, \textsc{LGTD}
achieves stable and balanced decomposition across fixed, transitive, and variable
season-length settings. Extensive experiments on synthetic and real-world
datasets demonstrate that \textsc{LGTD} consistently outperforms period-based
baselines under nonstationary and irregular temporal dynamics. At the same time,
\textsc{LGTD} scales linearly in both runtime and memory with series length and
remains the fastest method across all tested lengths, whereas several baselines
degrade super-linearly or exhaust memory. Together with its lack of any
season-length parameter, these properties position \textsc{LGTD} as a robust,
efficient, and interpretable decomposition primitive for long, heterogeneous
time-series analysis. A natural direction for future work is to extend the
emergent-seasonality formulation to an incremental, streaming variant for strictly
online settings. An implementation and datasets, as well as Supplementary Materials are provided in Appendix that can be found in \url{https://github.com/chotanansub/LGTD}.

\clearpage

\balance
\bibliographystyle{ACM-Reference-Format}
\bibliography{KDD2026}

\clearpage
\appendix

\section{Supplementary Materials}
\label{app:supplementary}

We provide additional technical details to support the findings presented in the main text. This document includes a comprehensive sensitivity analysis of the \textsc{LGTD} hyperparameters, the full experimental protocol used for benchmarking, and an extended set of quantitative and visual results across all evaluated datasets. These materials are intended to facilitate the reproduction of our experiments and provide a deeper audit of the model's performance under varied non-stationary conditions.

\subsection{Sensitivity Analysis}
\label{app:sensitivity}

To assess the robustness of \textsc{LGTD}, we conducted a sensitivity analysis on its two primary internal hyperparameters: the local window size $W$, which governs the locality of the trend-seasonal separation, and the error percentile $p$ used within the AutoTrend-LLT module. We evaluated the decomposition Mean Squared Error (MSE) over a dense grid of discrete parameter values while keeping all other components constant.

Figure~\ref{fig:sensitivity} visualizes the error surfaces across a suite of synthetic datasets featuring varied periodic structures and non-linear trends. The results demonstrate that \textsc{LGTD} is well-conditioned; the error surfaces exhibit smooth gradients without sharp bifurcations or isolated failure regions. Notably, while the optimal $(W, p)$ configuration may shift slightly depending on the dataset's signal-to-noise ratio, the default setting $(W=5, p=50)$ consistently resides within a broad, low-error basin. This indicates that \textsc{LGTD} can achieve near-optimal performance across diverse scenarios without the need for exhaustive per-dataset tuning.

\begin{figure*}[t]
    \centering
    \includegraphics[width=\linewidth]{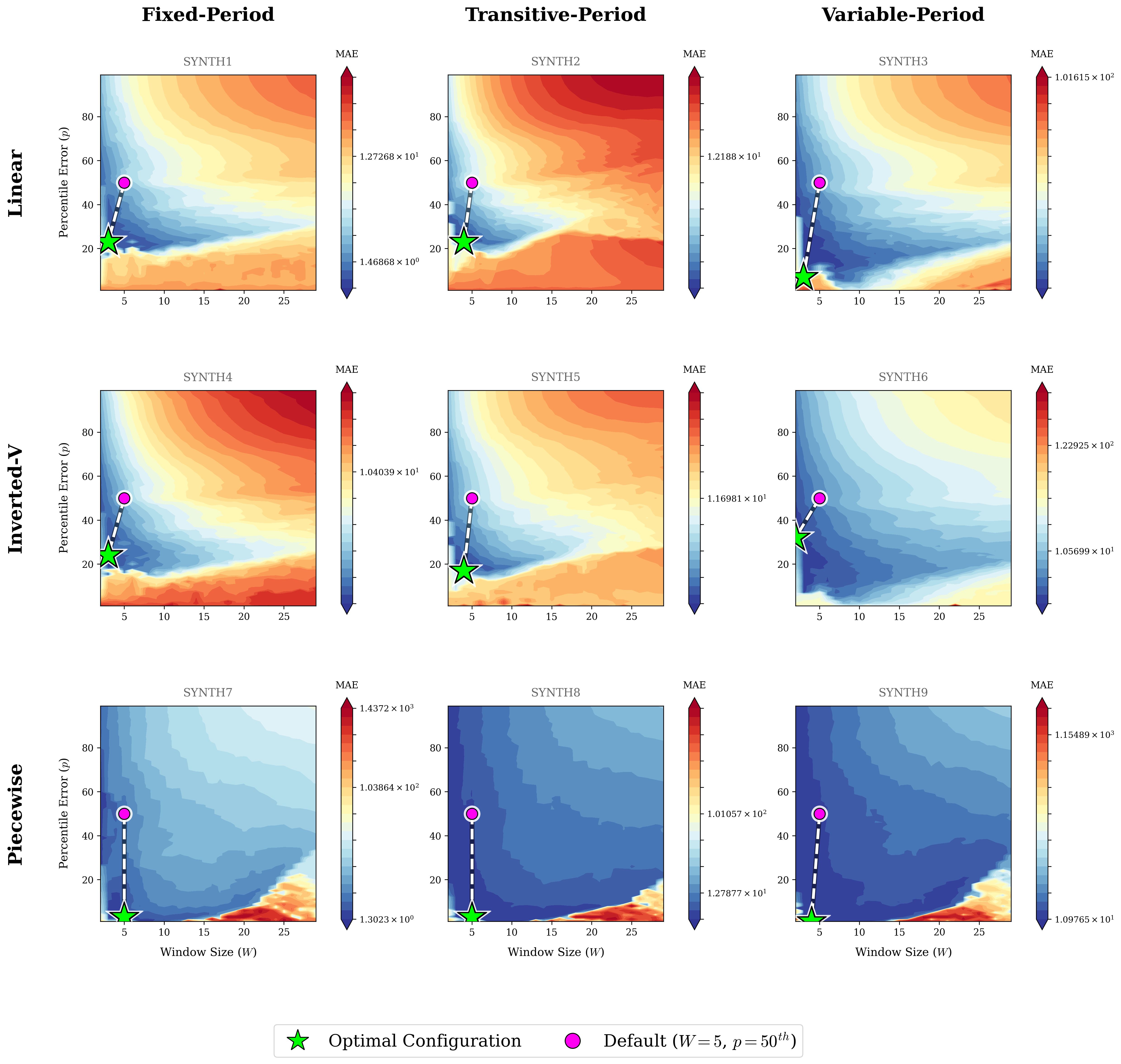}
    \caption{Sensitivity analysis of \textsc{LGTD} with respect to window size $W$ and error percentile $p$. The heat map reports decomposition error across configurations. Performance varies smoothly, indicating that the model is well-conditioned with respect to hyperparameter choice.}
    \label{fig:sensitivity}
\end{figure*}

\subsection{Hyperparameter Configurations}
\label{app:full_params}

For the sake of transparency and reproducibility, we document the complete hyperparameter settings for all baseline methods and \textsc{LGTD} in Table~\ref{tab:model_parameters}. Parameters were sourced from the original authors' recommended settings or determined via standard grid search where documentation was insufficient.

\begin{table*}[!htb]
\centering
\setlength{\tabcolsep}{2pt}
\renewcommand{\arraystretch}{1.05}
\caption{Hyperparameter configurations for all models across synthetic and real-world datasets.}
\label{tab:model_parameters}
\begin{tabular}{llcccccccccccc}
\toprule
& & \multicolumn{9}{c}{\textbf{Synthetic Datasets}} & \multicolumn{3}{c}{\textbf{Real-World Datasets}} \\
\cmidrule(lr){3-11} \cmidrule(lr){12-14}
\textbf{Model} & \textbf{Parameter} & \textbf{s1} & \textbf{s2} & \textbf{s3} & \textbf{s4} & \textbf{s5} & \textbf{s6} & \textbf{s7} & \textbf{s8} & \textbf{s9} & \textbf{ETTh1} & \textbf{ETTh2} & \textbf{Sunspot} \\
\midrule
\multirow{2}{*}{LGTD}
 & Window Size & \multicolumn{12}{c}{\underline{5}} \\
 & Error Percentile & \multicolumn{12}{c}{\underline{50}} \\
\midrule
\multirow{1}{*}{ASTD}
 & Seasonality Smoothing & \multicolumn{12}{c}{\underline{0.7}} \\
\midrule
\multirow{2}{*}{ASTD$_{\text{Online}}$}
 & Seasonality Smoothing & \multicolumn{12}{c}{\underline{0.7}} \\
 & Init Window Size & \multicolumn{12}{c}{\underline{300}} \\
\midrule
\multirow{10}{*}{FastRobustSTL}
 & Period & \underline{120} & \underline{120} & \underline{120} & \underline{120} & 60 & \underline{120} & \underline{120} & \underline{120} & \underline{120} & 24 & 24 & 12 \\
 & $\lambda_1$ (Trend) & 1 & 1 & 1 & 1 & 5 & 1 & 1 & 1 & 1 & 1 & 1 & 1 \\
 & $\lambda_2$ (Seasonal) & 10.00 & 10.00 & 10.00 & 10.00 & 20.00 & 10.00 & 10.00 & 10.00 & 10.00 & 10.00 & 10.00 & 10.00 \\
 & $K$ (Bilateral) & \underline{2} & \underline{2} & \underline{2} & \underline{2} & 1 & \underline{2} & \underline{2} & \underline{2} & \underline{2} & \underline{2} & \underline{2} & 1 \\
 & $H$ (Trend) & \underline{5} & \underline{5} & \underline{5} & \underline{5} & 3 & \underline{5} & \underline{5} & \underline{5} & \underline{5} & \underline{5} & \underline{5} & 2 \\
 & $d_{n1}$ & \underline{1} & \underline{1} & \underline{1} & \underline{1} & 1.5 & \underline{1} & \underline{1} & \underline{1} & \underline{1} & \underline{1} & \underline{1} & 5 \\
 & $d_{n2}$ & \underline{1} & \underline{1} & \underline{1} & \underline{1} & 1.5 & \underline{1} & \underline{1} & \underline{1} & \underline{1} & \underline{1} & \underline{1} & 5 \\
 & $d_{s1}$ & \underline{50.00} & \underline{50.00} & \underline{50.00} & \underline{50.00} & 75.00 & \underline{50.00} & \underline{50.00} & \underline{50.00} & \underline{50.00} & \underline{50.00} & \underline{50.00} & 100.00 \\
 & $d_{s2}$ & \underline{1} & \underline{1} & \underline{1} & \underline{1} & 1.5 & \underline{1} & \underline{1} & \underline{1} & \underline{1} & \underline{1} & \underline{1} & 5 \\
 & Max Iterations & \underline{1000} & \underline{1000} & \underline{1000} & \underline{1000} & 800 & \underline{1000} & \underline{1000} & \underline{1000} & \underline{1000} & \underline{1000} & \underline{1000} & 20 \\
\midrule
\multirow{3}{*}{OneShotSTL}
 & Period & \underline{120} & 60 & \underline{120} & \underline{120} & \underline{120} & \underline{120} & \underline{120} & \underline{120} & \underline{120} & 24 & 24 & 12 \\
 & Init Ratio & \multicolumn{12}{c}{\underline{0.3}} \\
 & Shift Window & \multicolumn{12}{c}{\underline{0}} \\
\midrule
\multirow{3}{*}{OnlineSTL}
 & Periods & \underline{[120]} & [60] & \underline{[120]} & \underline{[120]} & \underline{[120]} & \underline{[120]} & \underline{[120]} & \underline{[120]} & \underline{[120]} & [24] & [24] & [12] \\
 & $\lambda$ (Smoothing) & 0.3 & 0.1 & 0.3 & 0.3 & 0.3 & 0.3 & 0.3 & 0.3 & 0.3 & 0.3 & 0.3 & 0.3 \\
 & Init Window Ratio & \multicolumn{12}{c}{\underline{0.3}} \\
\midrule
\multirow{11}{*}{STL}
 & Period & \underline{120} & 60 & \underline{120} & \underline{120} & \underline{120} & \underline{120} & \underline{120} & \underline{120} & \underline{120} & 24 & 24 & 12 \\
 & Seasonal Window & \multicolumn{12}{c}{\underline{13}} \\
 & Trend Window & \multicolumn{12}{c}{} \\
 & Robust & \multicolumn{12}{c}{\underline{$\times$}} \\
 & Low Pass & \multicolumn{12}{c}{} \\
 & Low Pass Deg & \multicolumn{12}{c}{1} \\
 & Low Pass Jump & \multicolumn{12}{c}{1} \\
 & Seasonal Deg & \multicolumn{12}{c}{1} \\
 & Seasonal Jump & \multicolumn{12}{c}{1} \\
 & Trend Deg & \multicolumn{12}{c}{1} \\
 & Trend Jump & \multicolumn{12}{c}{1} \\
\midrule
\multirow{6}{*}{STR}
 & Seasonal Periods & \underline{[120]} & \underline{[120]} & \underline{[120]} & \underline{[120]} & \underline{[120]} & \underline{[120]} & \underline{[120]} & \underline{[120]} & \underline{[120]} & [24] & [24] & [12] \\
 & Trend $\lambda$ & \multicolumn{12}{c}{\underline{1000.00}} \\
 & Seasonal $\lambda$ & \multicolumn{12}{c}{100.00} \\
 & Robust & \multicolumn{12}{c}{\underline{$\times$}} \\
 & Auto Params & \multicolumn{12}{c}{\underline{$\times$}} \\
 & N Trials & \multicolumn{12}{c}{\underline{10}} \\
\bottomrule
\end{tabular}
\end{table*}

\subsection{Complete Decomposition Results}
\label{app:full_results}

We provide a granular breakdown of decomposition errors (MSE/MAE) for each structural component: Trend ($T$), Seasonal ($S$), and Residual ($R$). Table~\ref{tab:decomposition_transposed} presents these metrics across three distinct periodicity regimes: \textbf{Fixed}, \textbf{Transitive}, and \textbf{Variable}. 

This detailed reporting reveals that \textsc{LGTD}'s primary advantage lies in its ability to suppress cross-component leakage. In the "Variable Period" regime, traditional methods often erroneously attribute frequency shifts to the trend or residual components, whereas \textsc{LGTD} maintains a sparse residual and a coherent seasonal signal..

\newcolumntype{Y}{>{\centering\arraybackslash}X}

\begin{table*}[!p]
\centering
\footnotesize
\setlength{\tabcolsep}{2pt}
\renewcommand{\arraystretch}{1.0}
\caption{Decomposition errors (MSE/MAE) across synthetic datasets.}
\label{tab:decomposition_transposed}
\begin{tabularx}{\textwidth}{llllYYYYYYYY}
\toprule
\textbf{Trend} & \textbf{Period} & \textbf{Comp.} & \textbf{Metric} & LGTD & STL & STR & FastRobustSTL & $ASTD$ & $ASTD_{Online}$ & OnlineSTL & OneShotSTL \\
\midrule
\multirow{18}{*}{\rotatebox{90}{\textbf{Linear}}} & \multirow{6}{*}{\textbf{Fixed}} & \multirow{2}{*}{Trend} & MSE & 0.15 & \textbf{0.01} & 4.06 & \underline{0.08} & 8.06 & 7.81 & 5.35 & 18.82 \\
 &  &  & MAE & 0.33 & \textbf{0.08} & 1.50 & \underline{0.22} & 2.37 & 2.10 & 1.53 & 3.45 \\ \addlinespace[2pt]
 &  & \multirow{2}{*}{Seasonal} & MSE & 14.65 & \textbf{0.16} & 164.41 & \underline{1.29} & 5.21 & 61.89 & 377.12 & 377.50 \\
 &  &  & MAE & 3.12 & \textbf{0.32} & 10.88 & \underline{0.88} & 1.85 & 6.94 & 10.55 & 10.46 \\ \addlinespace[2pt]
 &  & \multirow{2}{*}{Residual} & MSE & 14.48 & \textbf{0.17} & 128.30 & 1.29 & 3.87 & 61.70 & 6.18 & \underline{0.60} \\
 &  &  & MAE & 3.10 & \textbf{0.32} & 9.45 & 0.89 & 1.59 & 6.98 & 2.10 & \underline{0.58} \\
 \cmidrule(lr){2-12}
 & \multirow{6}{*}{\textbf{Transitive}} & \multirow{2}{*}{Trend} & MSE & \textbf{3.00} & \underline{4.87} & 41.85 & 852.32 & 7.80 & 10.42 & 5.99 & 2220.42 \\
 &  &  & MAE & \underline{1.50} & \textbf{1.21} & 4.13 & 25.36 & 2.32 & 2.49 & 1.79 & 33.17 \\ \addlinespace[2pt]
 &  & \multirow{2}{*}{Seasonal} & MSE & \textbf{32.17} & 216.54 & 1196.24 & 861.03 & \underline{46.68} & 96.14 & 383.42 & 2531.68 \\
 &  &  & MAE & \textbf{4.80} & 10.27 & 31.10 & 25.49 & \underline{5.04} & 7.93 & 11.31 & 41.18 \\ \addlinespace[2pt]
 &  & \multirow{2}{*}{Residual} & MSE & 28.68 & 213.14 & 959.29 & \textbf{1.38} & 45.79 & 95.99 & \underline{12.06} & 30.85 \\
 &  &  & MAE & 4.59 & 10.27 & 27.66 & \textbf{0.87} & 5.05 & 7.95 & \underline{2.80} & 3.72 \\
 \cmidrule(lr){2-12}
 & \multirow{6}{*}{\textbf{Variable}} & \multirow{2}{*}{Trend} & MSE & \textbf{0.70} & 169.32 & 203.94 & 1204.47 & \underline{8.10} & 25.50 & 52.66 & 934.91 \\
 &  &  & MAE & \textbf{0.72} & 8.79 & 10.14 & 30.76 & \underline{2.42} & 3.88 & 6.29 & 22.36 \\ \addlinespace[2pt]
 &  & \multirow{2}{*}{Seasonal} & MSE & \textbf{22.93} & 1190.37 & 1230.83 & 1213.92 & \underline{146.11} & 174.49 & 521.91 & 1453.29 \\
 &  &  & MAE & \textbf{3.92} & 30.59 & 31.48 & 30.89 & \underline{9.16} & 11.30 & 15.27 & 31.05 \\ \addlinespace[2pt]
 &  & \multirow{2}{*}{Residual} & MSE & \underline{23.00} & 762.63 & 650.54 & \textbf{1.51} & 143.36 & 185.76 & 82.20 & 300.92 \\
 &  &  & MAE & \underline{3.96} & 23.42 & 21.86 & \textbf{0.95} & 9.17 & 11.60 & 7.82 & 14.29 \\
\midrule
\multirow{18}{*}{\rotatebox{90}{\textbf{Inverted-V}}} & \multirow{6}{*}{\textbf{Fixed}} & \multirow{2}{*}{Trend} & MSE & 5.14 & \underline{1.26} & 2.47 & \textbf{0.69} & 1425.59 & 162.75 & 196.99 & 27.27 \\
 &  &  & MAE & 1.02 & \textbf{0.45} & 1.40 & \underline{0.60} & 28.92 & 8.64 & 10.83 & 4.14 \\ \addlinespace[2pt]
 &  & \multirow{2}{*}{Seasonal} & MSE & 33.77 & \textbf{0.23} & 2.95 & \underline{2.04} & 884.76 & 52.67 & 500.45 & 377.96 \\
 &  &  & MAE & 4.97 & \textbf{0.38} & 1.49 & \underline{1.08} & 23.98 & 5.38 & 16.39 & 10.68 \\ \addlinespace[2pt]
 &  & \multirow{2}{*}{Residual} & MSE & 27.37 & \underline{1.33} & \textbf{0.13} & 1.47 & 134.15 & 65.65 & 56.74 & 2.62 \\
 &  &  & MAE & 4.60 & \underline{0.62} & \textbf{0.26} & 0.93 & 9.28 & 5.99 & 5.78 & 1.32 \\
 \cmidrule(lr){2-12}
 & \multirow{6}{*}{\textbf{Transitive}} & \multirow{2}{*}{Trend} & MSE & \underline{8.52} & \textbf{3.52} & 354.59 & 248.04 & 1425.70 & 184.69 & 209.36 & 2695.13 \\
 &  &  & MAE & \underline{1.42} & \textbf{1.04} & 15.68 & 8.63 & 28.93 & 9.18 & 11.16 & 38.17 \\ \addlinespace[2pt]
 &  & \multirow{2}{*}{Seasonal} & MSE & \textbf{36.19} & 215.46 & 991.10 & 308.80 & 949.09 & \underline{104.47} & 502.50 & 2787.26 \\
 &  &  & MAE & \textbf{4.98} & 10.26 & 27.44 & 9.68 & 24.04 & \underline{7.10} & 16.47 & 43.59 \\ \addlinespace[2pt]
 &  & \multirow{2}{*}{Residual} & MSE & \textbf{27.00} & 213.52 & 231.39 & \underline{30.22} & 191.02 & 123.40 & 52.99 & 32.82 \\
 &  &  & MAE & 4.46 & 10.22 & 13.27 & \textbf{2.24} & 10.63 & 8.02 & 5.67 & \underline{3.72} \\
 \cmidrule(lr){2-12}
 & \multirow{6}{*}{\textbf{Variable}} & \multirow{2}{*}{Trend} & MSE & \underline{165.15} & 167.07 & 201.12 & 1203.79 & 1478.09 & \textbf{148.42} & 246.83 & 984.61 \\
 &  &  & MAE & 8.79 & \underline{8.65} & 9.99 & 30.72 & 29.51 & \textbf{8.52} & 13.30 & 24.05 \\ \addlinespace[2pt]
 &  & \multirow{2}{*}{Seasonal} & MSE & \underline{205.34} & 1190.60 & 1230.80 & 1213.80 & 840.08 & \textbf{195.40} & 643.69 & 1440.47 \\
 &  &  & MAE & \textbf{9.91} & 30.59 & 31.48 & 30.85 & 22.95 & \underline{11.77} & 17.43 & 31.17 \\ \addlinespace[2pt]
 &  & \multirow{2}{*}{Residual} & MSE & \underline{31.84} & 775.36 & 659.87 & \textbf{1.53} & 282.29 & 235.45 & 124.94 & 311.42 \\
 &  &  & MAE & \underline{4.92} & 23.60 & 22.01 & \textbf{0.96} & 11.91 & 12.73 & 9.60 & 14.53 \\
\midrule
\multirow{18}{*}{\rotatebox{90}{\textbf{Piecewise}}} & \multirow{6}{*}{\textbf{Fixed}} & \multirow{2}{*}{Trend} & MSE & 36.34 & \underline{0.67} & 4.00 & \textbf{0.55} & 392.57 & 117.26 & 372.37 & 334.28 \\
 &  &  & MAE & 2.93 & \textbf{0.36} & 1.79 & \underline{0.57} & 16.92 & 8.48 & 14.53 & 13.01 \\ \addlinespace[2pt]
 &  & \multirow{2}{*}{Seasonal} & MSE & 54.04 & \textbf{0.18} & 4.63 & \underline{1.91} & 294.82 & 166.10 & 1031.33 & 967.00 \\
 &  &  & MAE & 4.71 & \textbf{0.33} & 1.90 & \underline{1.08} & 14.79 & 11.23 & 20.94 & 16.97 \\ \addlinespace[2pt]
 &  & \multirow{2}{*}{Residual} & MSE & 15.93 & \underline{0.80} & \textbf{0.21} & 1.63 & 35.32 & 188.70 & 26.99 & 2.07 \\
 &  &  & MAE & 3.30 & \underline{0.57} & \textbf{0.32} & 0.99 & 4.92 & 11.64 & 4.16 & 1.16 \\
 \cmidrule(lr){2-12}
 & \multirow{6}{*}{\textbf{Transitive}} & \multirow{2}{*}{Trend} & MSE & \underline{66.37} & \textbf{13.12} & 1826.80 & 2263.08 & 394.80 & 142.17 & 390.21 & 5866.60 \\
 &  &  & MAE & \underline{3.45} & \textbf{2.09} & 37.00 & 41.34 & 16.99 & 9.24 & 15.09 & 56.48 \\ \addlinespace[2pt]
 &  & \multirow{2}{*}{Seasonal} & MSE & \textbf{92.60} & 554.08 & 2491.78 & 2281.51 & 404.94 & \underline{258.81} & 1057.93 & 6396.83 \\
 &  &  & MAE & \textbf{6.17} & 16.42 & 43.46 & 41.51 & 16.85 & \underline{13.21} & 22.01 & 64.55 \\ \addlinespace[2pt]
 &  & \multirow{2}{*}{Residual} & MSE & \underline{25.70} & 545.83 & 154.09 & \textbf{2.21} & 144.66 & 272.68 & 42.46 & 44.86 \\
 &  &  & MAE & \underline{3.93} & 16.44 & 8.89 & \textbf{1.00} & 9.28 & 13.43 & 4.95 & 4.31 \\
 \cmidrule(lr){2-12}
 & \multirow{6}{*}{\textbf{Variable}} & \multirow{2}{*}{Trend} & MSE & \underline{337.21} & 432.92 & 2773.16 & 2927.05 & 395.73 & \textbf{187.87} & 623.25 & 4390.01 \\
 &  &  & MAE & 15.51 & \underline{14.08} & 46.92 & 47.34 & 16.93 & \textbf{10.95} & 19.12 & 51.85 \\ \addlinespace[2pt]
 &  & \multirow{2}{*}{Seasonal} & MSE & \textbf{360.96} & 3046.73 & 3120.35 & 2999.48 & 649.97 & \underline{462.51} & 1438.61 & 4612.60 \\
 &  &  & MAE & \textbf{16.05} & 48.94 & 49.82 & 47.80 & 20.13 & \underline{18.15} & 26.11 & 56.39 \\ \addlinespace[2pt]
 &  & \multirow{2}{*}{Residual} & MSE & \underline{25.45} & 1955.67 & 34.74 & \textbf{5.92} & 391.35 & 500.48 & 184.18 & 501.44 \\
 &  &  & MAE & 4.22 & 37.52 & \underline{3.46} & \textbf{1.54} & 15.23 & 18.68 & 11.82 & 16.35 \\
\bottomrule
\end{tabularx}
\end{table*}

\subsection{Visual Decomposition Examples}
\label{app:visual_comparisons}

Figure~\ref{fig:synth_all} presents decomposition results for representative
synthetic datasets, organized by trend type (rows) and seasonality pattern
(columns). Each block shows the original series with the estimated global trend,
the extracted seasonal component, and the corresponding residual. The datasets
span linear, inverted-V, and piecewise trends under fixed, transitive, and
variable season-length regimes, providing a controlled comparison across
increasingly nonstationary settings.

Across all scenarios, \textsc{LGTD} consistently recovers smooth and coherent
global trends while adapting the seasonal component to changes in cycle length
and amplitude. In contrast, period-based methods exhibit visible distortion in
the seasonal component and increased residual structure when seasonality drifts
or transitions. These visual results complement the quantitative MAE evaluations
in Section~\ref{sec:results}, illustrating how \textsc{LGTD} suppresses
cross-component leakage and maintains low-variance residuals without requiring
explicit season-length specification.

\begin{figure*}[!htb]
  \centering
  \small
  \begin{tabular}{@{}cccc@{}}
    & \textbf{Fixed-Period} & \textbf{Transitive-Period} & \textbf{Variable-Period} \\
    \rotatebox{90}{\hspace{1em}\textbf{Linear}} & 
    \includegraphics[width=0.3\textwidth]{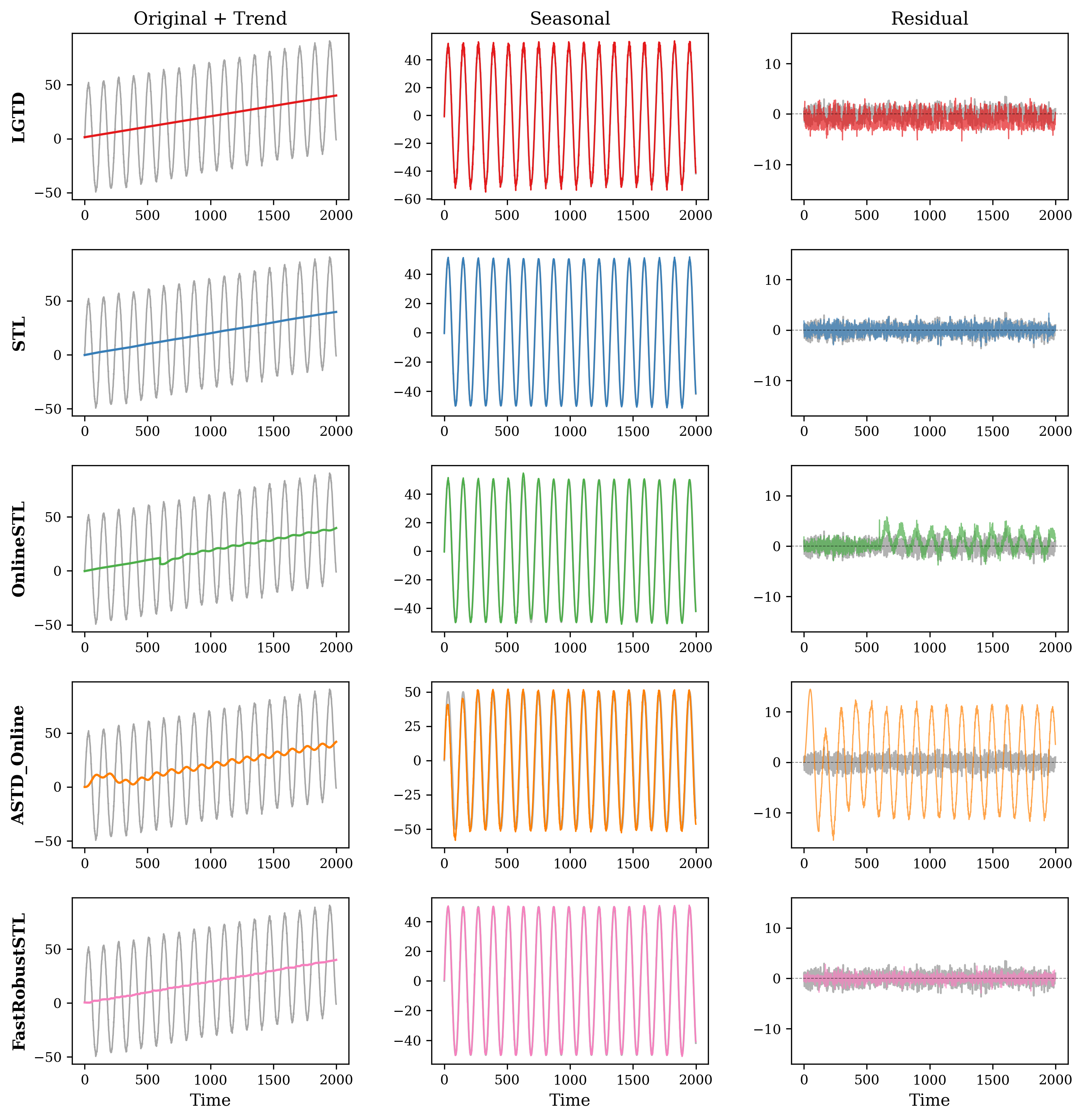} &
    \includegraphics[width=0.3\textwidth]{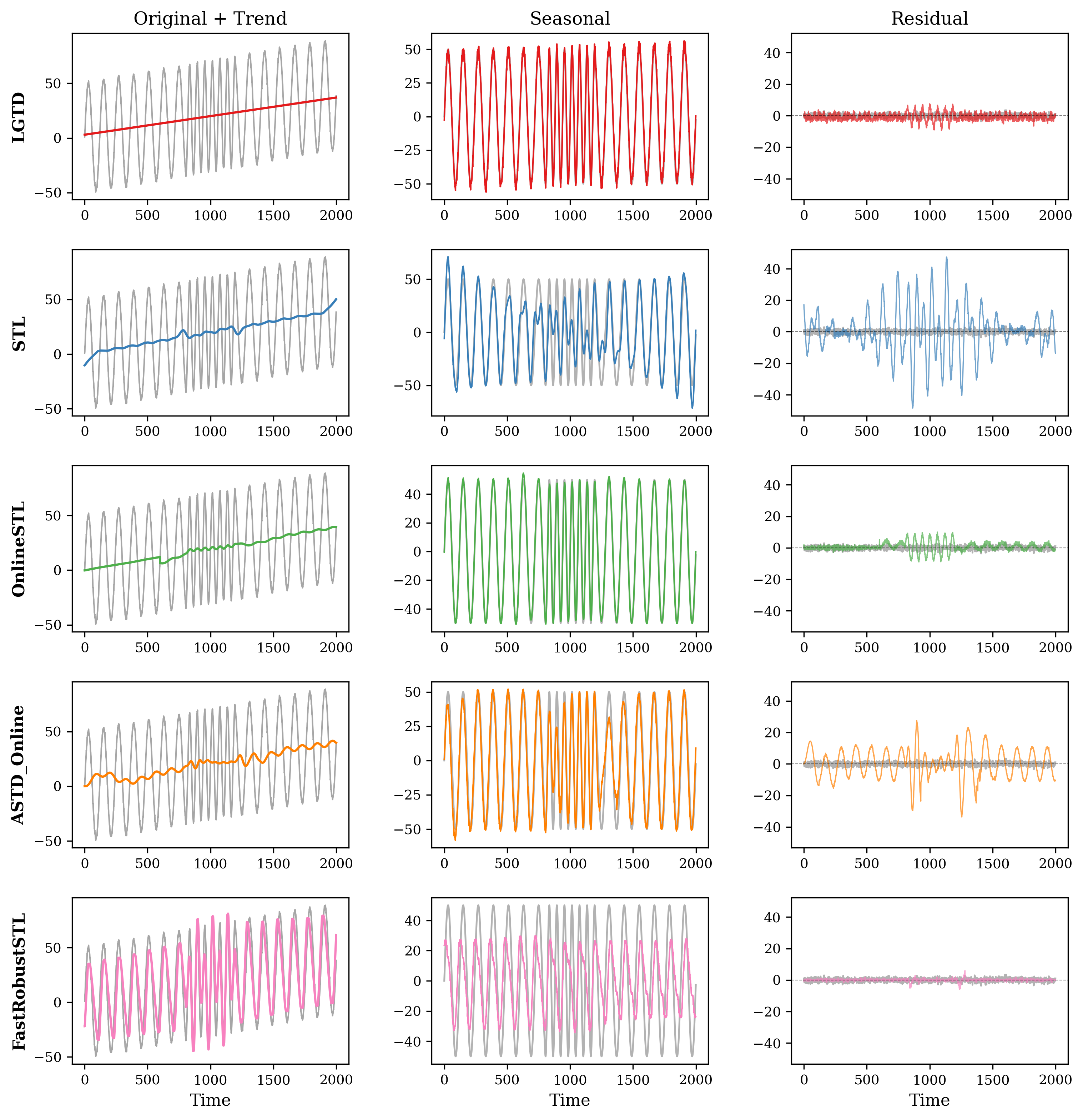} &
    \includegraphics[width=0.3\textwidth]{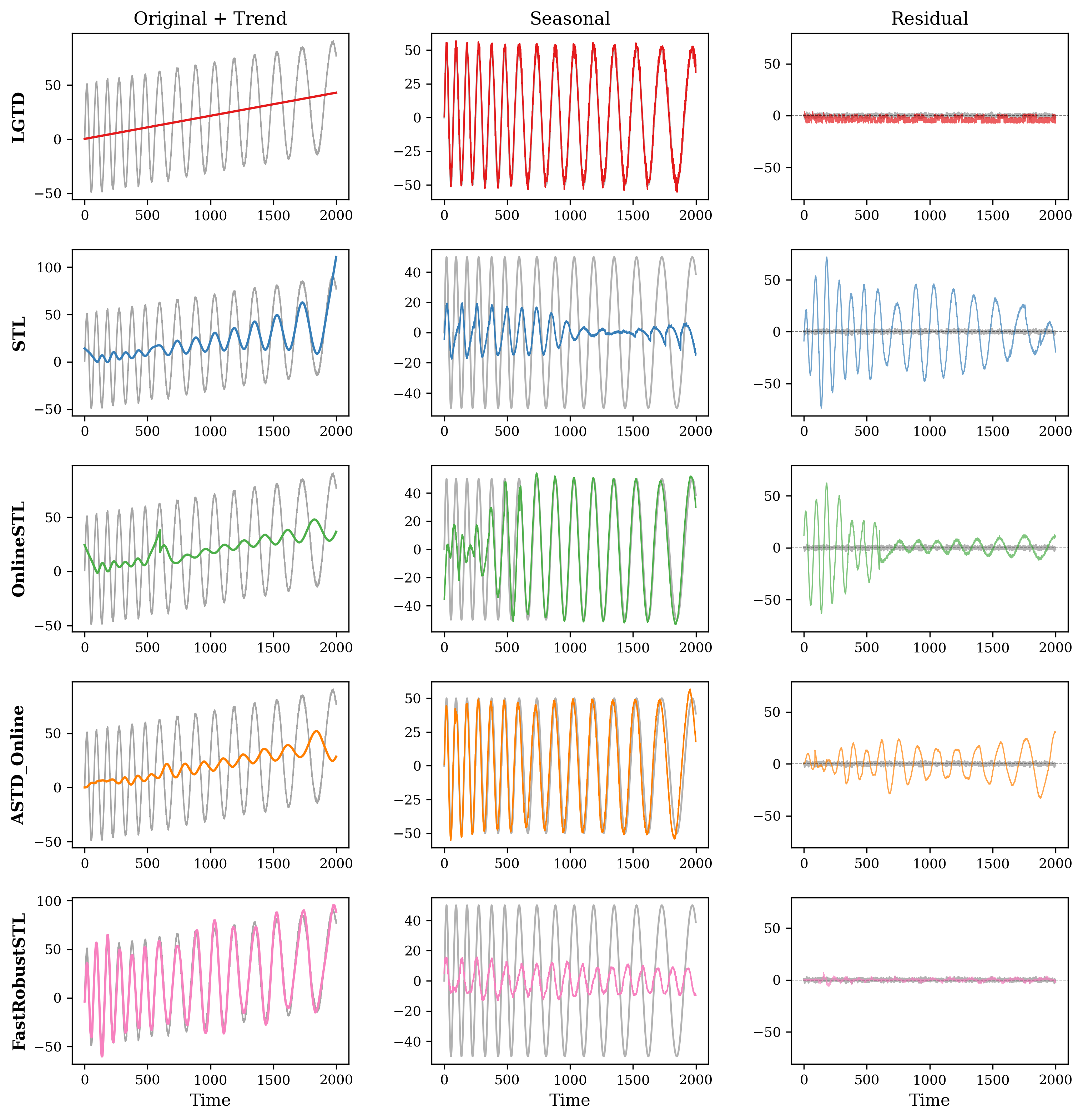} \\
    & (a) synth1 & (b) synth2 & (c) synth3 \\[0.5em]
    
    \rotatebox{90}{\hspace{1em}\textbf{Inverted-V}} & 
    \includegraphics[width=0.3\textwidth]{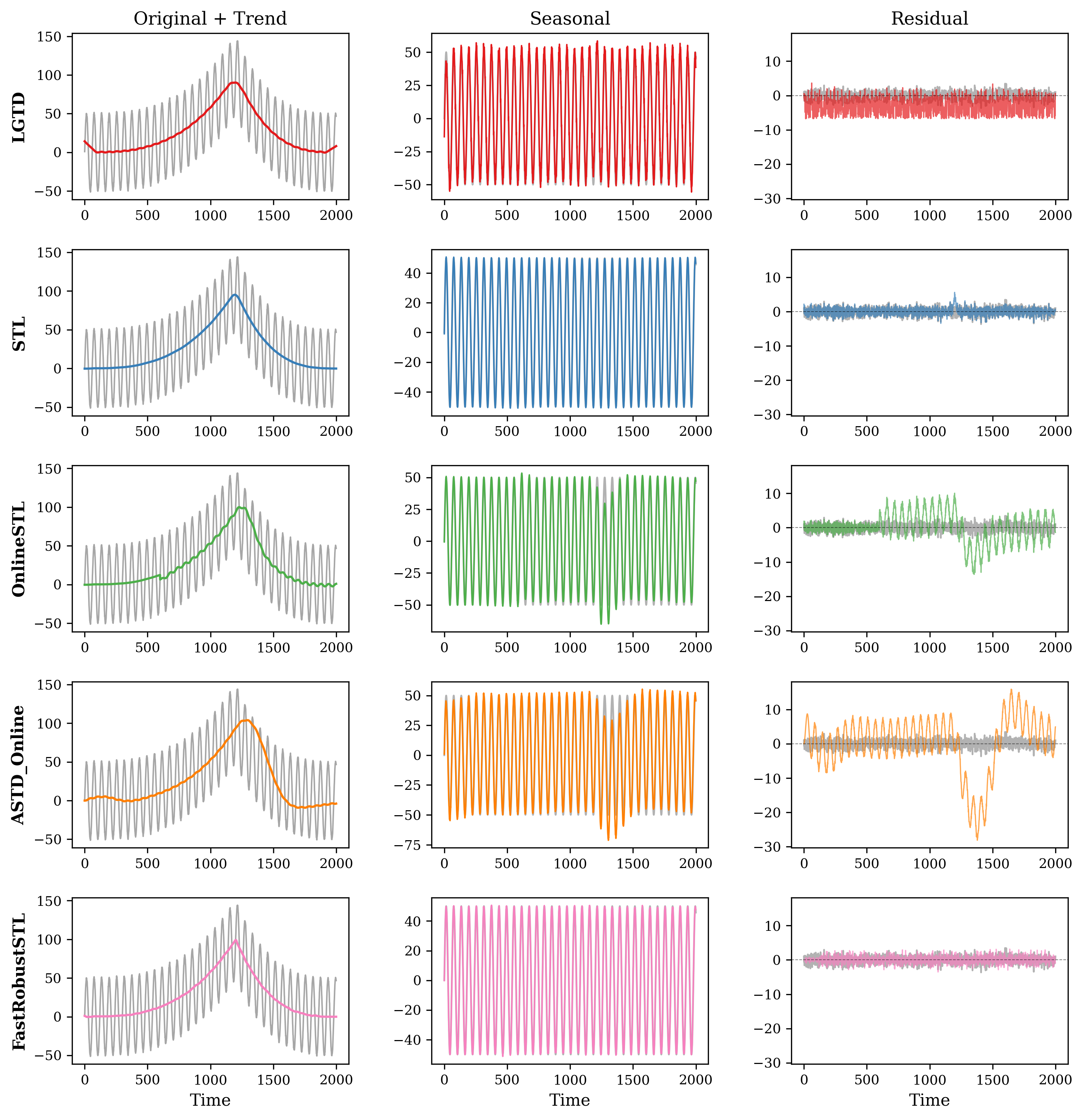} &
    \includegraphics[width=0.3\textwidth]{FIG/SYNTHETIC/synth5_comparison.png} &
    \includegraphics[width=0.3\textwidth]{FIG/SYNTHETIC/synth8_comparison.png} \\
    & (d) synth4 & (e) synth5 & (f) synth6 \\[0.5em]
    
    \rotatebox{90}{\hspace{1.5em}\textbf{Piecewise}} & 
    \includegraphics[width=0.3\textwidth]{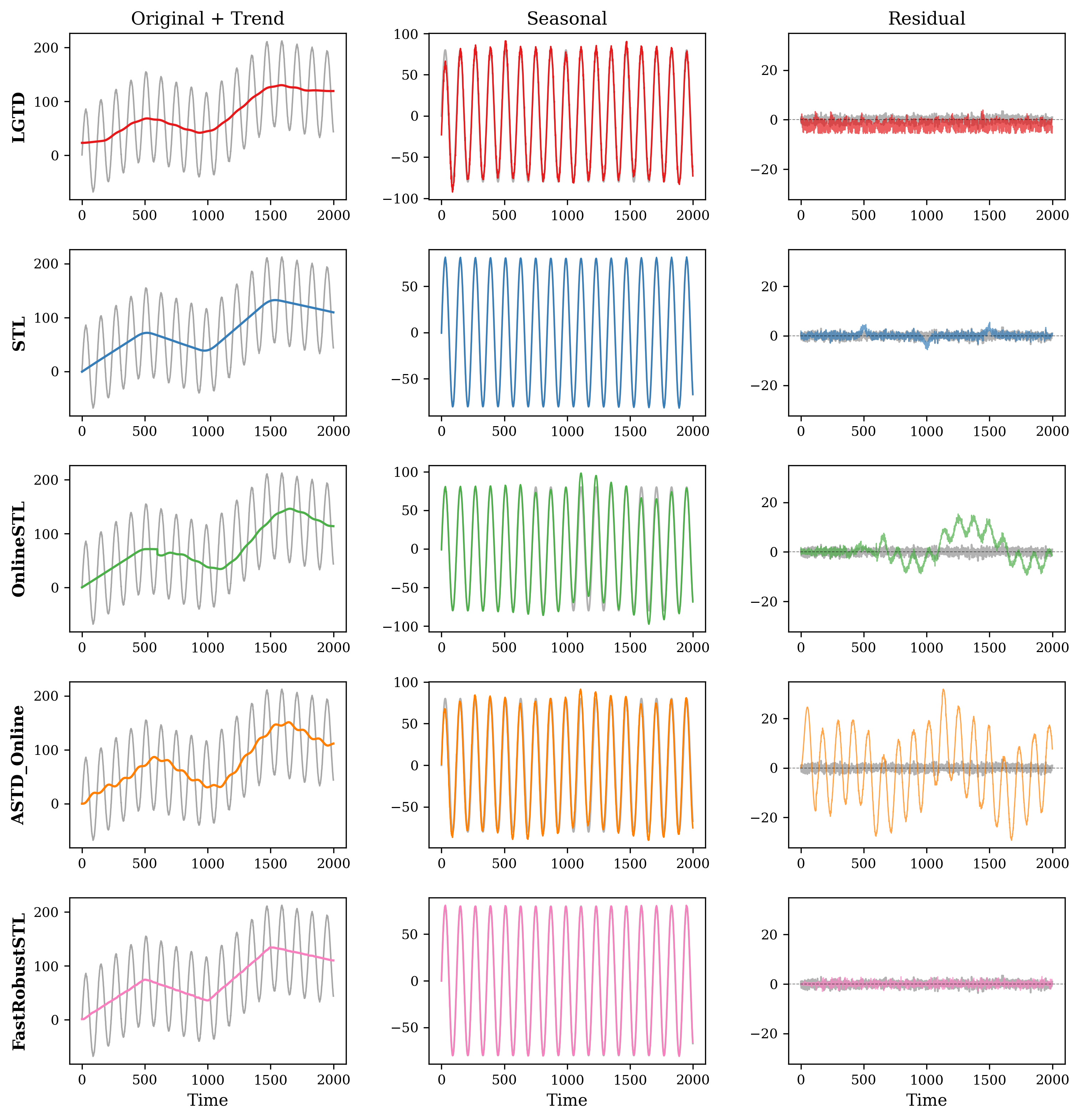} &
    \includegraphics[width=0.3\textwidth]{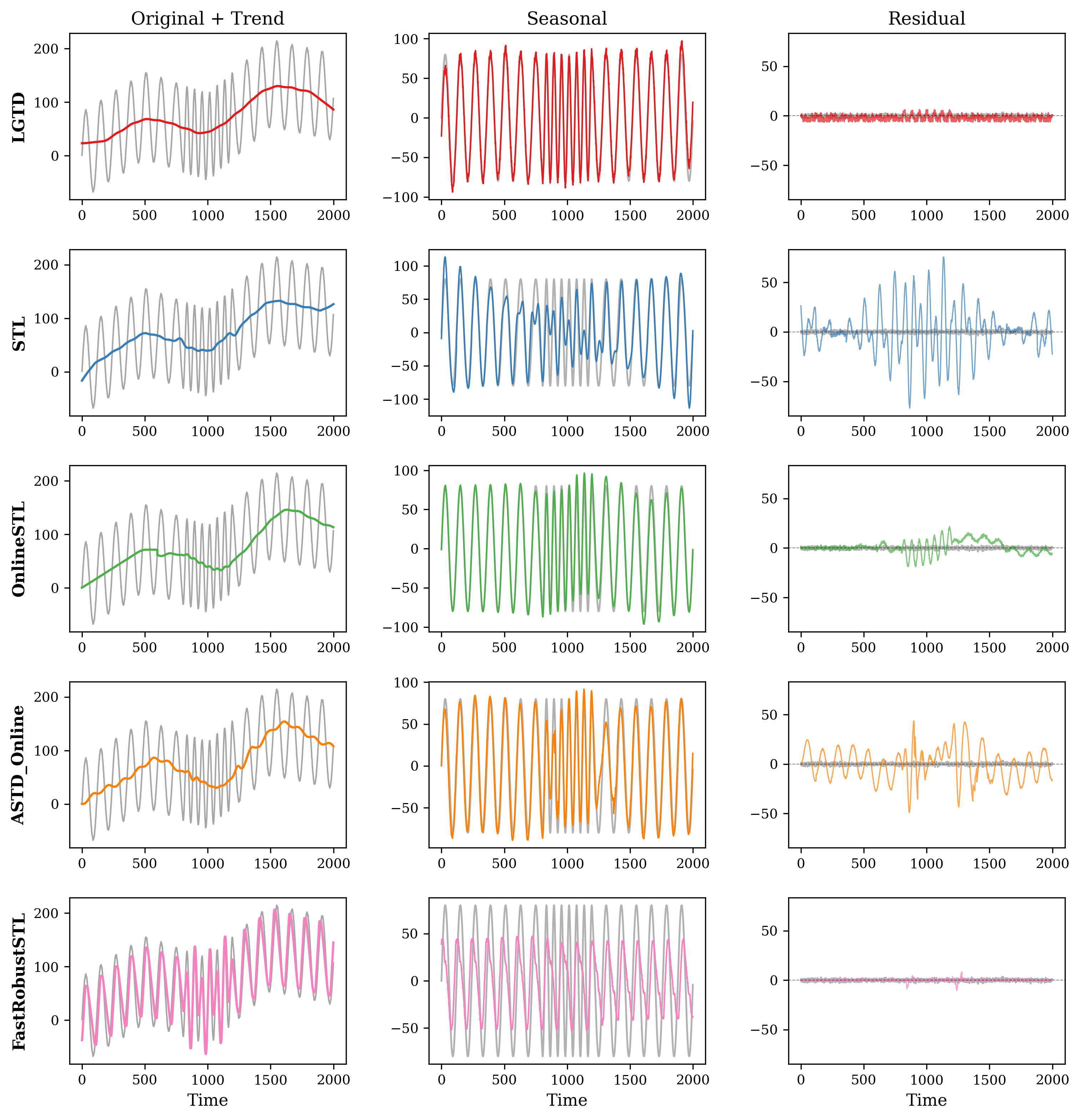} &
    \includegraphics[width=0.3\textwidth]{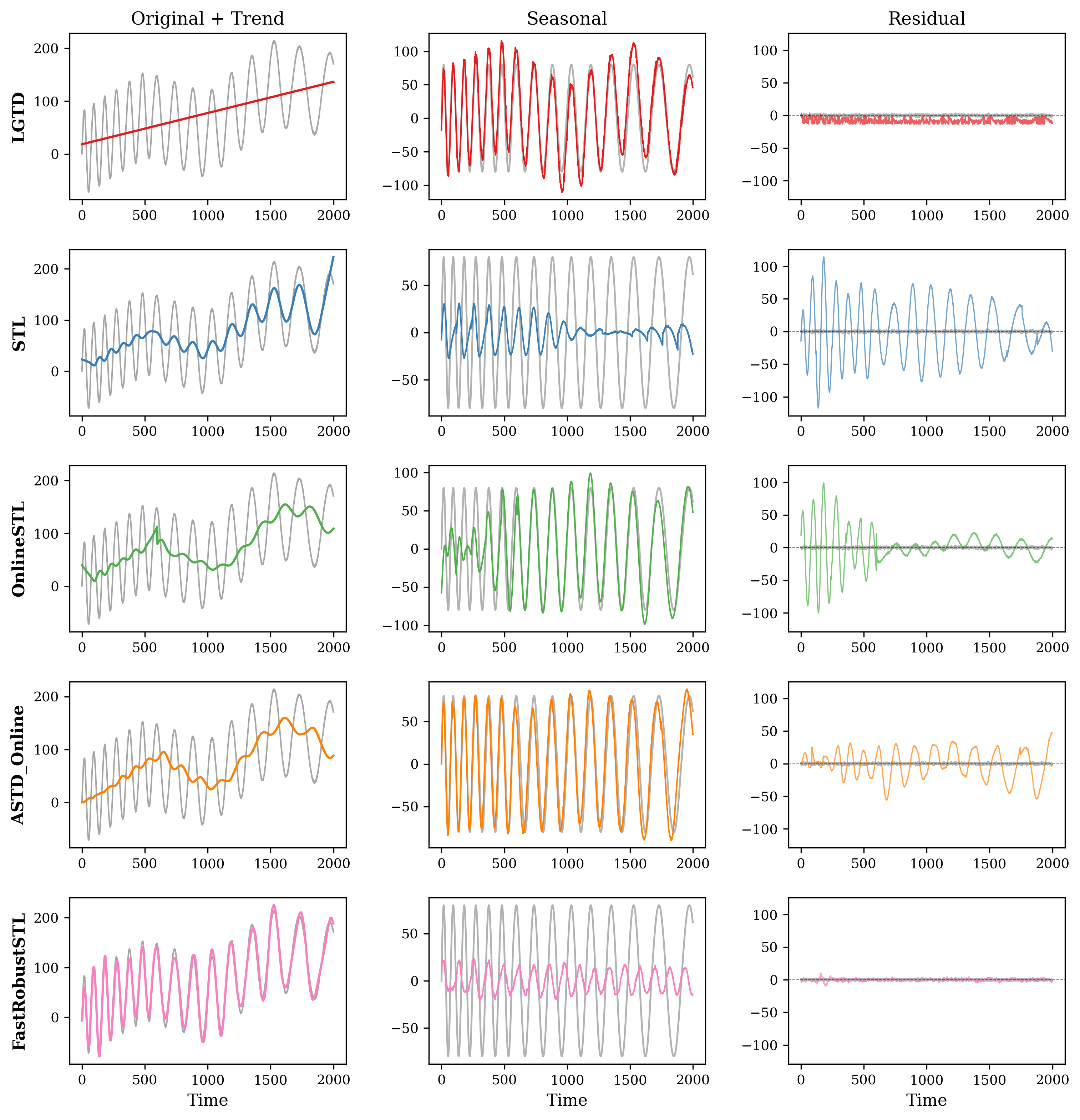} \\
    & (g) synth7 & (h) synth8 & (i) synth9
  \end{tabular}
  \caption{Decomposition comparisons across synthetic datasets. Rows represent trend types; columns represent periodicity variations.}
  \label{fig:synth_all}
\end{figure*}

\subsection{Source Code and Data Availability}
\label{app:code}
To support the reproducibility of this research, we provide the full implementation and datasets at the following anonymous repository: \url{https://github.com/chotanansub/LGTD}. The repository includes a configuration file to recreate the exact environment used for our experiments.








\end{document}